\newcommand\E{\ensuremath{\mathbb{E}}}
\newcommand\R{\ensuremath{\mathbb{R}}}
\newcommand\F{\mathcal{F}}
\newcommand\half{\frac{1}{2}}
\newcommand{\dx}[1]{ \,d#1 }
\newcommand{\indic}[1]{1\hspace{-2.1mm}{1}_{\{#1\}}} 
\DeclareMathOperator*{\argmax}{arg\,max}
\def\R{{\mathbb R}}
\def\P{{\mathbb P}}
\def\1{{\mathbf 1}}
\def\F{{\mathcal F}}
\def\L{{\mathcal L}}
\def\setT{{\mathcal T}}
\newtheorem{theorem}{Theorem}
\newtheorem{corollary}[theorem]{Corollary}
\newtheorem{lemma}[theorem]{Lemma}
\newtheorem{proposition}[theorem]{Proposition}
\newtheorem{remark}[theorem]{Remark}
\newenvironment{proof}[1][Proof]{\noindent\textbf{#1.} }{\ \rule{0.5em}{0.5em}}
\numberwithin{equation}{section}
\numberwithin{theorem}{section}
\begin{document}
\title{Optimal Mean Reversion Trading\\ with Transaction Costs and Stop-Loss Exit\thanks{Work partially supported by NSF grant DMS-0908295.}}
\author{Tim Leung\thanks{Corresponding author. IEOR Department, Columbia University, New York, NY 10027; email:\,\mbox{leung@ieor.columbia.edu}.} \and Xin Li\thanks{IEOR Department, Columbia University, New York, NY 10027; email:\,\mbox{xl2206@columbia.edu}.} }
\date{\today}
\maketitle
\begin{abstract}
Motivated by the industry practice of pairs trading, we study the optimal timing strategies for trading a mean-reverting price spread. An optimal double stopping problem is formulated to analyze the timing to start and subsequently liquidate the position subject to transaction costs. Modeling the price spread by an Ornstein-Uhlenbeck process, we apply a probabilistic methodology and  rigorously derive the optimal price intervals  for  market entry and exit. As an extension, we incorporate a stop-loss constraint to limit  the maximum  loss.  We show that the entry region is characterized by a bounded price interval that lies strictly above the stop-loss level. As for the exit timing, a higher stop-loss level always implies a  lower optimal take-profit level.  Both analytical and numerical results are provided to illustrate the dependence of timing strategies on model parameters such as transaction costs and stop-loss level.
\end{abstract}\vspace{10pt}

\begin{small}
\noindent {\textbf{Keywords:}\,  optimal double stopping,  mean reversion trading, Ornstein-Uhlenbeck process, stop-loss }\\
{\noindent {\textbf{JEL Classification:}\,  C41, G11, G12 }}\\
{\noindent {\textbf{Mathematics Subject Classification (2010):}\, 60G40,  91G10,  62L15 }}\\
\end{small}


\newpage
\linespread{1.1}

\section{Introduction}

It has been widely observed that many asset prices exhibit mean reversion, including  commodities (see \cite{schwartz1997stochastic}),   foreign exchange rates (see \cite{engel1989long,anthony1998mean,larsen2007diffusion}),  as well as US and global  equities (see  \cite{poterba1988mean,malliaropulos1999mean,balvers2000mean,
gropp2004mean}).  Mean-reverting processes are also used to model  the dynamics of interest rate, volatility, and default risk.  In industry,  hedge fund managers and investors often attempt to construct mean-reverting prices  by simultaneously taking positions in two highly correlated or co-moving assets. The advent of exchange-traded funds (ETFs) has further facilitated this  \emph{pairs trading} approach since some ETFs  are designed to track identical or similar  indexes and assets.  For instance, \cite{Trian_Montana2011}  investigate  the  mean-reverting spreads between  commodity ETFs and design model for statistical arbitrage.  \cite{GoldMinerSpreads2013} also examine the mean-reverting spread  between physical gold and  gold equity ETFs.

Given  the  price dynamics of some risky asset(s), one important problem commonly faced by  individual and institutional investors  is to determine when to open and close a position.  While observing the prevailing market prices, a speculative  investor  can choose to enter the market immediately or wait for a future opportunity. After completing the first trade, the investor will need  to decide when is best  to close  the position. This motivates the investigation of the  optimal sequential timing of trades.
 

In this paper, we study  the optimal timing of trades subject to transaction costs under  the   Ornstein-Uhlenbeck (OU) model. Specifically, our  formulation leads to     an   \emph{optimal double stopping} problem  that gives the optimal entry and exit decision rules.   We obtain analytic solutions for both the entry and exit problems.  In addition, we  incorporate a stop-loss constraint to our trading problem.
 We find that  a higher stop-loss level induces the investor to  voluntarily liquidate earlier at a lower take-profit level.  Moreover,  the entry region is  characterized by a bounded price interval that lies strictly  above 
 stop-loss level. In other words, it is optimal to  wait if  the current price is too high or too close to the lower stop-loss level. This is intuitive since entering the market close to stop-loss implies a high chance of exiting at a loss afterwards. As a result, the delay region (complement of the entry region)  is \emph{disconnected}.  Furthermore, we show that  optimal liquidation level
  decreases with the stop-loss level until they coincide, in which case immediate liquidation is optimal at all price levels.


A typical solution  approach for optimal stopping problems driven by diffusion involves the analytical and numerical  studies of  the associated free boundary problems or variational inequalities (VIs); see e.g.  \cite{Bensoussan}, \cite{Oksendal2003}, and \cite{sun1992nested}.  For our   double optimal stopping problem, this method would determine the value functions from a  pair of VIs and require  regularity  conditions to guarantee that the solutions to the VIs indeed correspond to the optimal stopping problems. As noted by  \cite{dayanik2008optimal}, ``the  variational methods become challenging when the form of the reward function and/or the
dynamics of the diffusion obscure the shape of the optimal continuation region."   In our optimal entry timing problem, the reward function  involves the value function from the exit  timing problem, which is not monotone and can be positive and negative.

In contrast to the variational inequality approach, our proposed  methodology starts with a  characterization  of the   value functions    as the smallest concave majorant of any given reward function.   A key feature of this approach is that  it  allows us to directly construct the value function, without \emph{a priori} finding a candidate value function or imposing conditions on the stopping and delay (continuation) regions, such as whether they are connected or not.  In other words, our method will derive  the structure of the stopping and delay regions as an output.



 Our main results provide   the analytic expressions for  the value functions of the double stopping problems; see Theorems \ref{thm:optLiquOU} and \ref{thm:optEntryOU} (without stop-loss), and Theorems \ref{thm:optLiquOUSL} and \ref{thm:optEntryOUSL} (with stop-loss). 
 In earlier   studies, \cite{dynkin1969theorems} analyze the   concave characterization of excessive functions for  a standard Brownian motion, and \cite{dayanik2003optimal} and \cite{dayanik2008optimal} apply this idea to study   the optimal single  stopping of a  one-dimensional diffusion.   In this regard, we contribute to  this line of work by solving  a number of  optimal double stopping problems with and without a stop-loss exit under the OU model.

Among other related studies, \cite{Ekstrom2011} analyze the optimal single
liquidation timing  under the OU model  with zero
long-run mean and no transaction cost.  The current paper extends their model in a number of ways.  First, we analyze the optimal entry timing as well as the optimal   liquidation timing.  Our model  allows for  a non-zero long-run mean and transaction costs, along with a  stop-loss level.  \cite{song2009stochastic} propose  a numerical  stochastic approximation scheme to  solve for the optimal    buy-low-sell-high strategies over a finite horizon. Under a similar setting, \cite{zhang2008trading} and   \cite{zhang2010switching}   also
investigate the infinite sequential buying and selling/shorting problem under
exponential OU price dynamics with slippage cost.

In the context of  pairs trading, a number of studies have also considered market timing strategy with two  price levels. For example, \cite{gatev2006pairs} study the historical returns from the buy-low-sell-high strategy where the entry/exit levels are set as  $\pm$1 standard deviation from the long-run mean. Similarly,  \cite{Avellaneda2010}  consider starting and ending a pairs trade based on the spread's distance from its mean. In \cite{elliott2005pairs}, the market entry timing is modeled  by the first passage time of an OU process, followed by an exit at  a fixed finite horizon.   In comparison, rather than assigning \textit{ad hoc}  price levels or fixed trading times, our approach will generate the entry and exit thresholds as  solutions  of an optimal double stopping problem. Considering an exponential OU asset price with zero mean, \cite{bertram2010analytic} numerically computes the optimal enter and exit levels that maximize the expected return per unit time.  \cite{gregory2010} also apply this approach to log-spread following the CIR and GARCH diffusion models. Other timing strategies adopted by practitioners have been discussed in  \cite{Vidyamurthy2004}.

On the other hand,  the  related problem of constructing  portfolios  and hedging with mean reverting asset prices has been    studied. For example,  \cite{benth2005note} study the utility maximization problem that involves dynamically trading an exponential OU underlying asset.   \cite{jurek2007dynamic} analyze a finite-horizon portfolio optimization problem with an OU asset subject to the power utility and  Epstein-Zin recursive utility. 
	\cite{chiu2012dynamic} consider the dynamic trading of co-integrated assets with a mean-variance criterion. \cite{yan2012dynamic} derive  the dynamic  trading strategy for two co-integrated stocks in order to     maximize the expected terminal utility of wealth over a fixed horizon. They simplify  the associated Hamilton-Jacobi-Bellman equation and obtain a closed-form solution. In the stochastic control approach, incorporating transaction costs and  stop-loss exit can potentially limit  model tractability and is not implemented in these studies.

The rest of the paper is structured as follows. We  formulate the optimal trading problem  in Section \ref{sect-overview}, followed by a discussion on   our method of solution in Section \ref{sect-meth}. In Section \ref{sect-OU}, we analytically solve the optimal double stopping problem and examine  the optimal entry and exit strategies.  In Section \ref{sect-stoploss}, we study the trading problem with a stop-loss constraint.    The proofs of all lemmas are provided in the Appendix.

\section{Problem Overview}\label{sect-overview}

In the background, we fix the probability space $(\Omega, \F, \P)$ with the historical probability measure $\P$.   We consider an Ornstein-Uhlenbeck (OU) process driven by the SDE:
\begin{align}\label{XOU} dX_{t}=  \mu( \theta - X_t)\,dt+\sigma \,dB_{t},\end{align}
with constants $\mu, \sigma>0$, $\theta \in \R$, and state space $\R$. Here, $B$ is a standard Brownian motion under $\P$.   Denote by   $\mathbb{F}\equiv(\F_t)_{t\geq 0}$   the filtration  generated by $X$.

\subsection{A Pairs Trading Example}\label{sect-pairs}
Let us discuss a pairs trading example where we model the value of  the resulting position  by an OU process.  The primary  objective is to motivate our trading problem, rather  than proposing new estimation methodologies or empirical studies on pairs trading. For related studies and more details, we  refer to the seminal paper by  \cite{engle1987co}, the books \cite{Hamilton1994time, Tsay2005analysis}, and references therein.

We construct a portfolio by holding   $\alpha$ shares of  a risky asset  $S^{(1)}$ and shorting  $\beta$ shares of another risky asset  $S^{(2)}$, yielding a portfolio value  $X^{\alpha, \beta}_t =\alpha S^{(1)}_t - \beta S^{(2)}_t$ at  time $t \ge 0$.   The pair of assets are selected to form a  mean-reverting portfolio value. In addition, one can adjust the strategy $(\alpha, \beta)$ to enhance the level of mean reversion. For the purpose of testing mean reversion, only the ratio between $\alpha$ and $\beta$  matters, so we can keep  $\alpha$ constant while varying $\beta$ without loss of generality.  For every strategy  $(\alpha, \beta)$, we observe the resulting  portfolio values $(x_i^{\alpha, \beta})_{i = 0, 1, \ldots, n}$ realized over an $n$-day period. We  then apply the method of  maximum likelihood estimation (MLE)  to fit the  observed portfolio values to an OU process and determine  the  model parameters.  Under the OU model,   the conditional probability density of $X_{t_i}$ at time $t_i$ given $x_{i-1}$ at $t_{i-1}$ with time increment  $\Delta t = t_{i}-t_{i-1}$  is given by
\begin{align*}
f^{OU}(x_i|x_{i-1};\theta,\mu,\sigma) = \frac{1}{\sqrt{2\pi\tilde{\sigma}^2}}\exp\left(-\frac{(x_i-x_{i-1}e^{-\mu \Delta t} - \theta(1-e^{-\mu \Delta t}))^2 }{2\tilde{\sigma}^2} \right),
\end{align*}
with the constant
\begin{align*}
\tilde{\sigma}^2 = \sigma^2\frac{1-e^{-2\mu \Delta t}}{2\mu}.
\end{align*}
Using  the observed  values $(x_i^{\alpha, \beta})_{i = 0, 1, \ldots, n}$,   we  maximize the average log-likelihood  defined by
\begin{align}\label{maxlhd}
\ell(\theta,\mu,\sigma|x^{\alpha, \beta}_0,x^{\alpha, \beta}_1,\dots, x^{\alpha, \beta}_n) &:= \frac{1}{n}\sum_{i=1}^{n}\ln f^{OU}\left(x^{\alpha, \beta}_i|x^{\alpha, \beta}_{i-1};\theta,\mu,\sigma \right)\notag\\
&=-\frac{1}{2}\ln(2\pi)-\ln(\tilde{\sigma}) -\frac{1}{2n\tilde{\sigma}^2}\sum_{i=1}^{n}[x^{\alpha, \beta}_i-x^{\alpha, \beta}_{i-1}e^{-\mu \Delta t} - \theta(1-e^{-\mu \Delta t} )]^2,
\end{align}
and denote  by $\hat{\ell}(\theta^*,\mu^*,\sigma^*)$ the maximized  average log-likelihood  over $\theta$, $\mu$, and $\sigma$ for a given strategy $(\alpha, \beta)$.
For any $\alpha$, we choose the strategy $(\alpha, \beta^*)$, where  $\beta^* = \argmax_{\beta} \hat{\ell}(\theta^*,\mu^*,\sigma^*|x^{\alpha, \beta}_0,x^{\alpha, \beta}_1,\dots, x^{\alpha, \beta}_n)$. For example,  suppose we invest $A$ dollar(s) in asset $S^{(1)}$, so   $\alpha = A/S^{(1)}_0$ shares is held. At the same time, we short  $\beta = B/S^{(2)}_0$ shares in $S^{(2)}$, for  $B/A = 0.001, 0.002,\dots, 1$. This way, the sign of   the initial   portfolio value depends on the sign of the difference  $A-B$, which is non-negative.  Without loss of generality, we set $A=1$.

In Figure \ref{fig:MLE}, we illustrate an example based on two  pairs of  exchange-traded funds (ETFs), namely,  the Market Vectors Gold Miners (GDX) and  iShares Silver Trust (SLV)  against 
  the  SPDR Gold Trust (GLD) respectively.  These liquidly traded funds aim to track the price movements of  the NYSE Arca Gold Miners Index (GDX),  silver (SLV), and  gold bullion (GLD) respectively. These   ETF pairs are also used in \cite{Trian_Montana2011} and \cite{GoldMinerSpreads2013} for their statistical and empirical studies on  ETF pairs trading.

 Using price data from August 2011 to May 2012 ($n=200$, $\Delta t = 1/252$),  we compute and plot  in Figure \ref{fig:loglikeli} the average log-likelihood against the cash amount $B$, and find  that $\hat{\ell}$ is maximized at $B^* = 0.454$ (resp. 0.493) for the GLD-GDX pair (resp. GLD-SLV pair). From this MLE-optimal $B^*$, we obtain the   strategy $(\alpha, \beta^*)$, where $\alpha = 1/S^{(1)}_0$ and $\beta^* = B^* / S^{(2)}_0$.  In this example, the  average log-likelihood for the GLD-SLV pair happens to dominate that for GLD-GDX, suggesting   a higher degree of fit to the OU model. Figure \ref{fig:path} depicts the historical price paths with the strategy  $(\alpha, \beta^*)$.


We summarize the estimation  results in Table \ref{tab:calib}. For each pair, we first estimate the parameters for the OU model from empirical price data. Then, we use the estimated parameters to simulate  price paths according the corresponding OU process. Based on these   simulated OU paths, we perform another MLE  and obtain another set of OU  parameters as well as the maximum average log-likelihood $\hat{\ell}$. As we can see,  the two sets of estimation outputs (the rows names ``empirical" and ``simulated") are very close, suggesting the  empirical  price process fits well to the OU model.

%
%


%
%

\begin{figure}[ht]
 \centering
 \subfigure[]{
  \includegraphics[scale=0.58]{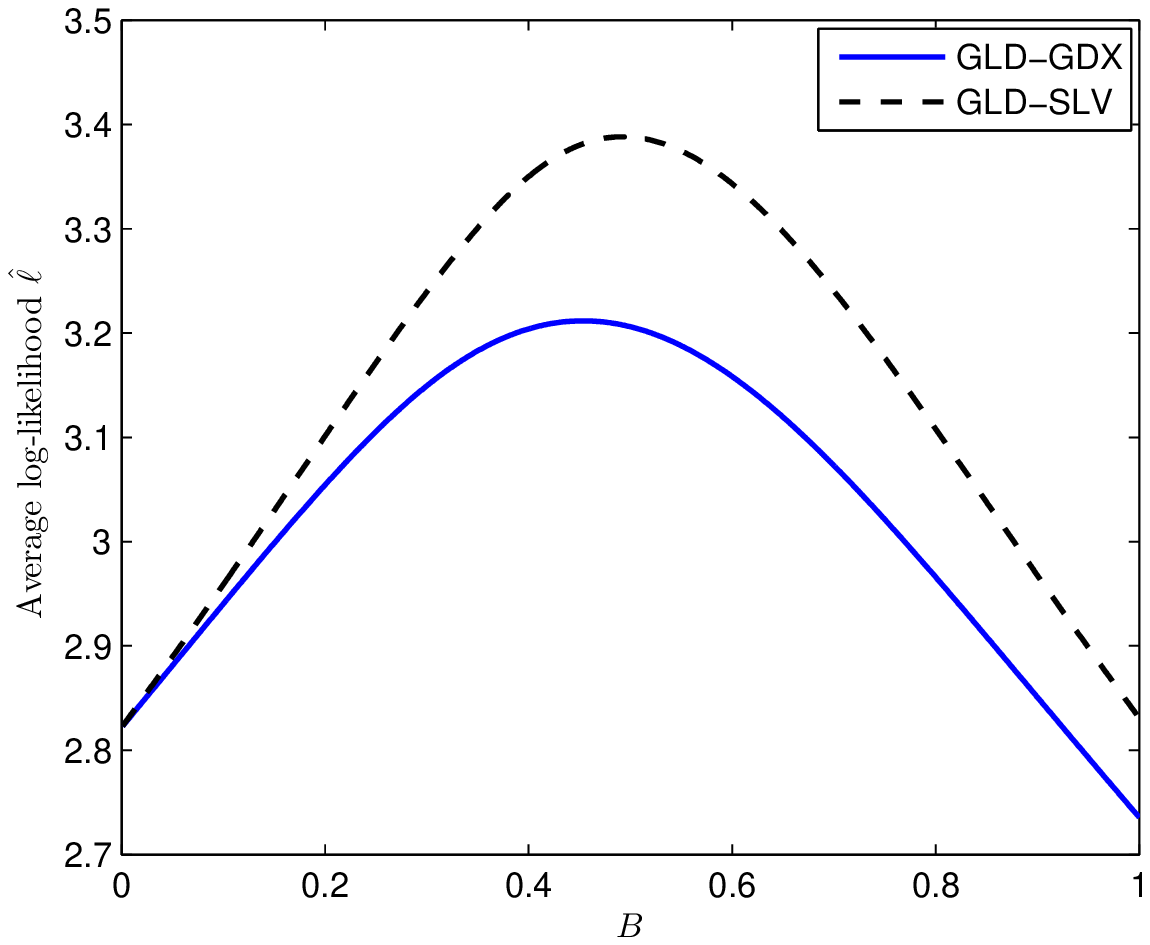}
   \label{fig:loglikeli}
   }
    \subfigure[]{
  \includegraphics[scale=0.58]{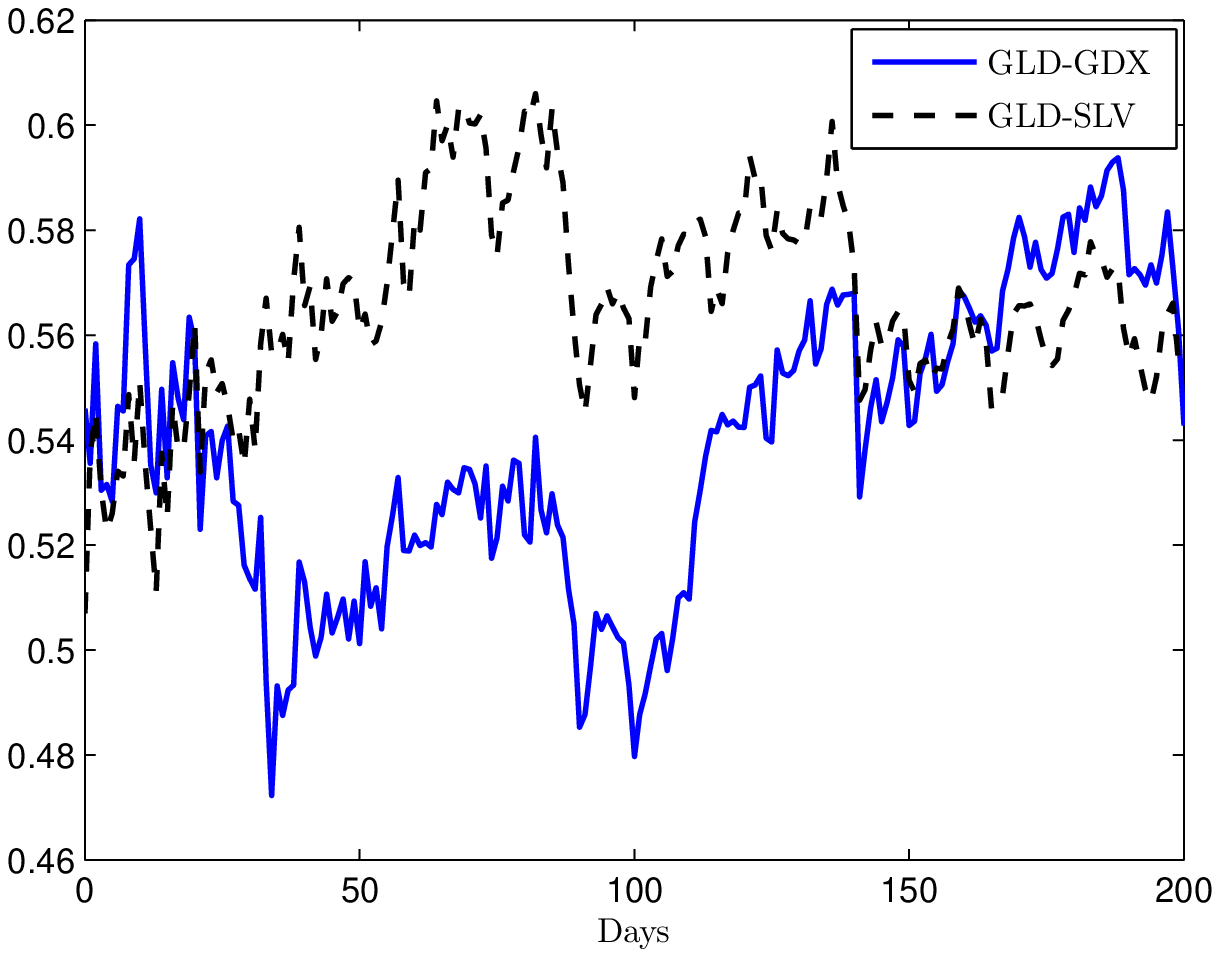}
   \label{fig:path}
   }
 \caption{\small{(a) Average log-likelihood plotted against $B$. (b) Historical price paths with maximum average log-likelihood. The solid line plots the portfolio price with longing $\$1$ GLD and shorting $\$0.454$ GDX, and the dashed line plots the portfolio price with longing $\$1$ GLD and shorting $\$0.493$ SLV.} }\label{fig:MLE}
\end{figure}

\begin{table}[h]
\begin{small}
\begin{center}
    \begin{tabular}{c|c|cccc}
    \hline
  & Price & $\hat{\theta}$  & $\hat{\mu}$ & $\hat{\sigma}$ & $\hat{\ell}$ \\ \hline\hline
   \multirow{2}{*}{GLD-GDX} &     {empirical} & 0.5388 & 16.6677 & 0.1599 & 3.2117\\
    & {simulated} & 0.5425 & 14.3893 & 0.1727 & 3.1304\\ \hline
   \multirow{2}{*}{GLD-SLV} &      {empirical} & 0.5680 &33.4593 & 0.1384 &3.3882\\
  &  {simulated} & 0.5629 & 28.8548  &0.1370 & 3.3898\\ \hline
    \end{tabular}
    \end{center}\end{small}
    \caption{\small{MLE estimates of  OU process parameters using historical prices of GLD, GDX, and SLV from August 2011 to May 2012. The portfolio consists of  $\$1$ in  GLD and  -$\$0.454$ in GDX (resp. -$\$0.493$  in SLV).  For each pair,  the second row (simulated) shows the MLE parameter  estimates  based on a  simulated price path corresponding to the estimated parameters from the first row (empirical).    }} \label{tab:calib}
\end{table}

%


%
\subsection{Optimal Stopping Problem}
Given that a price process or portfolio value evolves according to an OU process, our main objective  is to study the optimal timing to open and subsequently close the position subject to transaction costs.  This leads to the analysis of an optimal double stopping problem.

First, suppose that   the  investor  already has an  existing position whose  value process  $(X_t)_{t\ge 0}$  follows \eqref{XOU}.  If  the position is closed  at some time $\tau$, then the investor will receive the  value  $X_\tau$  and pay a constant transaction cost $c\in \R$. To maximize the expected discounted value,  the investor solves the optimal stopping problem
\begin{align}V(x) = \sup_{\tau \in \setT}\E_x\!\left\{e^{-r \tau}(X_{\tau} - c) \right\}, \label{V1a}\end{align}
where $\setT$ denotes  the set of all $\mathbb{F}$-stopping times, and $r>0$ is the investor's subjective constant discount rate. We have also used the shorthand notation:  $\E_x\{\cdot\}\equiv\E\{\cdot|X_0=x\}$.

From the investor's viewpoint, $V(x)$ represents the expected  liquidation
value associated with  $X$. On the other hand, the current price plus the
transaction cost constitute  the total cost to enter the trade. The investor
can always choose the optimal timing to start the trade, or not to enter at
all. This leads us to analyze the  entry timing inherent in the trading
problem. Precisely, we solve \begin{align}J(x) =  \sup_{\nu \in \setT
}\E_x\!\left\{e^{-\hat{r} \nu} ( V(X_{\nu})  - X_{\nu} - \hat{c})\right\},
\label{J1a}\end{align}with $\hat{r}>0$, $\hat{c}\in\R$.  In other words, the investor seeks
to maximize the expected difference between the value function
$V(X_\nu)$ and the current  $X_\nu$, minus transaction cost $\hat{c}$. The value function $J(x)$
represents the maximum expected value of the investment opportunity in
the price  process $X$, with transaction costs $\hat{c}$ and $c$ incurred, respectively, at entry and exit.  
For  our analysis, the pre-entry and post-entry discount rates,  $\hat{r}$
 and $r$, can be different, as long as $0<\hat{r} \leq r$. Moreover, the
 transaction costs $\hat{c}$ and $c$ can also differ, as long as $c+\hat{c}>0$.  Moreover, since    $\tau = +\infty$ and $\nu = + \infty$ are   candidate stopping times for \eqref{V1a} and \eqref{J1a} respectively, the two value functions $V(x)$ and $J(x)$  are non-negative.

 As extension, we can incorporate a stop-loss level of the pairs trade, that caps the
maximum loss.  In practice, the stop-loss level may be exogenously imposed by
the  manager of a trading desk. In effect, if the price $X$ ever reaches level
$L$ prior to the investor's voluntary liquidation time, then the position  will
be closed immediately. The stop-loss signal is given by the first passage
time
   \[\tau_L := \inf\{ t\ge 0 \,:\, X_t \le L\}.\] Therefore, we determine the entry and  liquidation timing from the constrained optimal stopping problem:
   \begin{align} \label{J1} J_L(x) &=  \sup_{\nu \in \setT }\E_x\left\{e^{-\hat{r} \nu} ( V_L(X_{\nu})  - X_{\nu} - \hat{c})\right\},\\
V_L(x) &= \sup_{\tau \in \setT}\E_x\left\{e^{-r (\tau\wedge \tau_L)}(X_{\tau\wedge \tau_L} - c) \right\}. \label{V1}\end{align}
 Due to the additional timing constraint, the investor may be forced to exit early at the stop-loss level for any given liquidation level. Hence, the stop-loss constraint reduces the   value functions, and precisely  we deduce that  $x-c \leq V_L(x) \leq V(x)$ and $0\leq J_L(x) \leq J(x)$.
 As we will show in Sections \ref{sect-OU} and \ref{sect-stoploss}, the optimal timing   strategies  with and without stop-loss are  quite  different.

 \section{Method of Solution}\label{sect-meth}

In this section, we disucss our method of solution. First, we denote    the infinitesimal generator of the OU process $X$ by  \begin{align}\label{genX}\L = \frac{\sigma^2}{2}\frac{d^2}{d x^2} + \mu(\theta - x)\frac{d}{d x},
\end{align}
and recall the classical  solutions of the differential equation \begin{align}\L u(x)=ru(x),\label{LUXOU}\end{align} for $x\in \R$, are (see e.g.  p.542 of \cite{borodin2002handbook} and Prop. 2.1 of \cite{alili2005representations}):
\begin{align}
F(x) \equiv F(x;r):= \int_0^\infty u^{\frac{r}{\mu}-1} e^{\sqrt{\frac{2\mu}{\sigma^2}}(x-\theta)u-\frac{u^2}{2}} \dx{u},\label{FOU}\\
G(x) \equiv G(x;r):= \int_0^\infty u^{\frac{r}{\mu}-1} e^{\sqrt{\frac{2\mu}{\sigma^2}}(\theta-x)u-\frac{u^2}{2}} \dx{u}.\label{GOU}\end{align}
Direct differentiation yields that   $F'(x) >0$,  $F''(x)>0$,  $G'(x)<0$ and $G''(x)>0$. Hence, we observe that both $F(x)$ and $G(x)$ are strictly positive and convex, and they are, respectively, strictly increasing and decreasing.


Define the first passage time of $X$ to some level $\kappa$ by $\tau_{\kappa} = \inf\{t \geq 0: X_t = \kappa\}$. As is well known, $F$ and $G$ admit the probabilistic expressions (see \cite{ito1965diffusion} and \cite{Rogers2000}):
\begin{align}\label{generalFG1}
\E_x\{e^{-r\tau_\kappa}\} = \begin{cases}
\frac{F(x)}{F(\kappa)} &\, \textrm{ if }\, x\leq \kappa,\\
\frac{G(x)}{G(\kappa)} &\, \textrm{ if }\, x\geq\kappa.
\end{cases}
\end{align}

A key step of our solution  method involves the  transformation  \begin{align}\label{psi2}\psi(x):=  \frac{F}{G}(x).\end{align}
  Starting at  any $x\in \R$, we denote  by  $\tau_a\wedge\tau_b$  the exit time from an  interval $[a,b]$ with  $-\infty \le a \le x \le b \le +\infty$.  With  the reward function $h(x)=x-c$, we compute the corresponding expected discounted reward:
\begin{align}
\E_x\{e^{-r(\tau_a\wedge\tau_b)}h(X_{\tau_a\wedge\tau_b})\} &= h(a)\E_x\{e^{-r\tau_a}\indic{\tau_a<\tau_b}\} + h(b) \E_x\{e^{-r\tau_b}\indic{\tau_a>\tau_b}\}\label{EH1}\\
&= h(a)\frac{F(x)G(b)-F(b)G(x)}{F(a)G(b)-F(b)G(a)} + h(b)\frac{F(a)G(x)-F(x)G(a)}{F(a)G(b)-F(b)G(a)}\label{EH1.5}\\
&= G(x)\left[\frac{h(a)}{G(a)}\frac{\psi(b)-\psi(x)}{\psi(b)-\psi(a)} + \frac{h(b)}{G(b)}\frac{\psi(x)-\psi(a)}{\psi(b)-\psi(a)} \right]\label{EH2}\\
&= G(\psi^{-1}(y))\left[H(y_a)\frac{y_b-y}{y_b-y_a}+H(y_b)\frac{y-y_a}{y_b-y_a} \right], \label{EH3}
\end{align}
where  $y_a=\psi(a)$, $y_b=\psi(b)$, and
\begin{align}\label{generalH}
H(y)  := \begin{cases}
\frac{h}{G}\circ \psi^{-1}(y) &\, \textrm{ if }\, y>0,\\
\lim_{x\to -\infty}\limits\frac{(h(x))^+}{G(x)} &\, \textrm{ if }\, y=0.
\end{cases}
\end{align}
The second equality \eqref{EH1.5} follows from the fact  that $f(x):= \E_x\{e^{-r(\tau_a\wedge\tau_b)}\indic{\tau_a<\tau_b}\}$ is the unique solution to \eqref{LUXOU}  with boundary conditions $f(a)=1$ and $f(b)=0$. Similar reasoning applies to the function $g(x):= \E_x\{e^{-r(\tau_a\wedge\tau_b)}\indic{\tau_a>\tau_b}\}$  with   $g(a)=0$ and $g(b)=1$. The last equality \eqref{EH3} transforms the problem from $x$ coordinate to $y = \psi(x)$  coordinate (see \eqref{psi2}).

 The candidate optimal exit interval $[a^*, b^*]$ is determined by maximizing the expectation in \eqref{EH1}. This is equivalent to maximizing \eqref{EH3} over $y_{a}$ and $y_{b}$  in  the transformed problem. This leads to
\begin{align}
W(y) := \sup_{\{y_a,y_b: y_a\leq y\leq y_b\}} \left[H(y_a)\frac{y_b-y}{y_b-y_a}+H(y_b)\frac{y-y_a}{y_b-y_a}\right].\label{Wy1}
\end{align}
This is the smallest concave majorant of $H$. Applying the definition of $W$ to \eqref{EH3}, we can express the maximal expected discounted reward as
\[G(x)W(\psi(x)) =\sup_{\{a,b: a \leq x \leq b\}} \E_x\{e^{-r(\tau_a\wedge\tau_b)}h(X_{\tau_a\wedge\tau_b})\}. \]

\begin{remark}\label{rmk:OUupper}If  $a=-\infty$, then we have $\tau_{a}=+\infty$  and  $\indic{\tau_a<\tau_b}=0$ a.s. In effect, this removes the  lower exit level, and the corresponding expected discounted reward is
\begin{align}
\E_x\{e^{-r(\tau_a\wedge\tau_b)}h(X_{\tau_a\wedge\tau_b})\} = \E_x\{e^{-r\tau_a}h(X_{\tau_a})\indic{\tau_a<\tau_b}\} +  \E_x\{e^{-r\tau_b}h(X_{\tau_b})\indic{\tau_a>\tau_b}\}= \E_x\{e^{-r\tau_b}h(X_{\tau_b})\}.\notag
\end{align}
 Consequently, by considering interval-type strategies, we also include the class of stopping strategies  of reaching a single upper level $b$  (see Theorem \ref{thm:optLiquOU} below).
\end{remark}

 Next, we prove the optimality of the proposed stopping strategy and provide an expression for the value function.

\begin{theorem}\label{thm:V}
The value function $V(x)$ defined in \eqref{V1a} is given by
\begin{align}\label{generalV}
V(x) = G(x)W(\psi(x)),
\end{align}
where $G$, $\psi$ and $W$ are defined in \eqref{GOU}, \eqref{psi2} and \eqref{Wy1}, respectively.
\end{theorem}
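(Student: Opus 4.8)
The plan is to establish the two inequalities $V(x)\ge G(x)W(\psi(x))$ and $V(x)\le G(x)W(\psi(x))$ separately, in both cases exploiting the $\psi$-scale transformation already set up in \eqref{EH1}--\eqref{Wy1}.

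\emph{Lower bound.} I would restrict the supremum in \eqref{V1a} to the subclass of exit times $\tau_a\wedge\tau_b$ of intervals $[a,b]$ with $-\infty\le a\le x\le b\le +\infty$. Each such time lies in $\setT$ and is a.s.\ finite by recurrence of $X$ (the degenerate case $a=-\infty$ reduces to a single upper barrier, exactly as in Remark \ref{rmk:OUupper}). By the chain of identities \eqref{EH1}--\eqref{EH3}, with $y=\psi(x)$, $y_a=\psi(a)$, $y_b=\psi(b)$,
\[
\E_x\{e^{-r(\tau_a\wedge\tau_b)}(X_{\tau_a\wedge\tau_b}-c)\}=G(x)\Bigl[H(y_a)\tfrac{y_b-y}{y_b-y_a}+H(y_b)\tfrac{y-y_a}{y_b-y_a}\Bigr].
\]
Taking the supremum over all admissible $(a,b)$, equivalently over $\{y_a\le y\le y_b\}$, and recalling that $W$ in \eqref{Wy1} is the smallest concave majorant of $H$, yields $V(x)\ge G(x)W(\psi(x))$.

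\emph{Upper bound.} Set $\Phi(x):=G(x)W(\psi(x))$. First I would check that $\Phi$ dominates the reward: since $W\ge H$ pointwise and $H(\psi(x))=(x-c)/G(x)$ for every $x$ (because $\psi(x)>0$), we get $\Phi(x)\ge x-c$; moreover $\Phi\ge 0$, since any concave majorant of $H$ is $\ge H(0)\ge 0$ at the boundary and $\ge H(\psi(c))=0$, hence nonnegative on the whole half-line by concavity, and $G>0$. Next I would show $\Phi$ is $r$-superharmonic: for a $C^2$ concave $W$ one computes, using $\L F=rF$, $\L G=rG$ and $F=G\psi$, the identity
\[
\L\Phi(x)-r\Phi(x)=\tfrac{\sigma^2}{2}\,G(x)\,\psi'(x)^2\,W''(\psi(x))\le 0,
\]
and for a merely concave $W$ the same conclusion follows via the It\^o--Tanaka formula (or by mollifying $W$). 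Being nonnegative, continuous and $r$-superharmonic, $\Phi$ is $r$-excessive, so $e^{-rt}\Phi(X_t)$ is a nonnegative supermartingale; the supermartingale inequality for $e^{-r(\tau\wedge n)}\Phi(X_{\tau\wedge n})$ together with Fatou's lemma gives, for every $\tau\in\setT$, $\E_x\{e^{-r\tau}\Phi(X_\tau)\indic{\tau<\infty}\}\le\Phi(x)$. Since the discounted reward vanishes on $\{\tau=\infty\}$,
\[
\E_x\{e^{-r\tau}(X_\tau-c)\}\le\E_x\{e^{-r\tau}\Phi(X_\tau)\indic{\tau<\infty}\}\le\Phi(x),
\]
and taking the supremum over $\tau\in\setT$ gives $V(x)\le\Phi(x)$, which combined with the lower bound proves $V(x)=G(x)W(\psi(x))$. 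Finiteness is not an obstacle: $x^+\le AF(x)$ for a constant $A$, and $e^{-rt}F(X_t)$ is a supermartingale, so $V(x)\le AF(x)<\infty$.

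\emph{Main obstacle.} The delicate step is the $r$-excessivity of $\Phi$. Because $W$ is only known to be concave (not $C^2$), the generator identity above must be justified rigorously — either through the It\^o--Tanaka formula applied to the concave function $W$ of the semimartingale $\psi(X_t)$, or, equivalently, via the $h$-transform of $\P$ by $e^{-rt}G(X_t)$, under which $\psi(X_t)=F(X_t)/G(X_t)$ is a local martingale and $W(\psi(X_t))$ is a local supermartingale by concavity. In either route one should localize along exit times of bounded intervals (on which $F$, $G$ are bounded and the relevant processes are genuine martingales), thereby sidestepping any question of whether $e^{-rt}F(X_t)$ and $e^{-rt}G(X_t)$ are true rather than strict local martingales, and then pass to the limit $\tau_n\uparrow\tau$ (and to $\tau=\infty$) using nonnegativity and Fatou. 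Everything else reduces to the computations \eqref{EH1}--\eqref{Wy1} and the monotonicity/convexity properties of $F$, $G$, $\psi$ already recorded.
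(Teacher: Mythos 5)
Your proposal is correct, and the lower bound coincides with the paper's (restrict to interval exit times, apply \eqref{EH1}--\eqref{EH3}, take the supremum to recover $W$). The upper bound, however, is argued by a genuinely different route. The paper never differentiates $W$: it picks, for the fixed starting point $x$, a supporting affine majorant $L_{\psi(x)}(z)=m_{\psi(x)}z+c_{\psi(x)}$ of the concave $W$ touching at $z=\psi(x)$, observes that $G(x)L_{\psi(x)}(\psi(x))=m_{\psi(x)}F(x)+c_{\psi(x)}G(x)$, and uses the martingale property of $e^{-rt}F(X_t)$ and $e^{-rt}G(X_t)$ together with optional sampling at the bounded time $t\wedge\tau$ to get
\begin{align*}
\E_x\{e^{-r(t\wedge\tau)}G(X_{t\wedge\tau})W(\psi(X_{t\wedge\tau}))\}\le m_{\psi(x)}F(x)+c_{\psi(x)}G(x)=G(x)W(\psi(x)),
\end{align*}
after which $W\ge H$ and maximization over $\tau$ and $t$ finish the proof. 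You instead establish that $\Phi=G\cdot(W\circ\psi)$ is a nonnegative $r$-excessive majorant of the reward via the generator identity, It\^o--Tanaka (or mollification, or the $h$-transform under which $\psi(X)$ is a local martingale), localization, and Fatou. Both arguments are sound; the paper's supporting-line trick buys a much shorter proof that completely sidesteps the regularity of $W$ --- precisely the issue you single out as the main obstacle --- while your route is the standard verification/excessivity machinery, more portable to other settings but requiring the extra technical care you correctly identify (concave It\^o--Tanaka, localization along exit times of bounded intervals, and the passage to the limit). Your auxiliary checks ($\Phi\ge x-c$, $\Phi\ge 0$, finiteness via $x^+\le AF(x)$) are fine, though the nonnegativity and finiteness steps are not needed in the paper's version since the supermartingale-type inequality is obtained directly at bounded times.
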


The proof is provided in Appendix {A.1}.   Let us emphasize that the optimal levels $(a^*,  b^*)$ may depend on the initial value  $x$, and can potentially coincide, or  take values $-\infty$ and $+\infty$. As such,  the structure of the stopping and delay regions can potentially  be characterized by {multiple} intervals, leading to \emph{disconnected}  delay regions (see {Theorem \ref{thm:optEntryOUSL}} below).  


We follow the procedure for  Theorem \ref{thm:V} to derive the expression for the  value function $J$ in \eqref{J1a}. First, we   denote $\hat{F}(x)=F(x;\hat{r})$ and $\hat{G}(x)=G(x;\hat{r})$ (see \eqref{FOU}--\eqref{GOU}), with discount rate $\hat{r}$. In addition,   we  define the transformation
\begin{align}\label{generalpsih}
\hat{\psi}(x):= \frac{\hat{F}}{\hat{G}}(x) \quad \text{ and } \quad  \hat{h}(x)= V(x)-x- \hat{c}.
\end{align}
Using these functions, we consider the function analogous to $H$:
\begin{align}\label{generalhatH}
\hat{H}(y) := \begin{cases}
\frac{\hat{h}}{\hat{G}}\circ \hat{\psi}^{-1}(y) &\, \textrm{ if }\, y>0,\\
\lim_{x\to -\infty}\limits\frac{(\hat{h}(x))^+}{\hat{G}(x)} &\, \textrm{ if }\, y=0.
\end{cases}
\end{align}
Following the steps   \eqref{EH1}--\eqref{Wy1} with   $F$, $G$, $\psi$, and $H$  replaced by  $\hat{F}$, $\hat{G}$, $\hat{\psi}$, and $\hat{H}$, respectively,  we write down  the smallest concave majorant  $\hat{W}$  of  $\hat{H}$, namely,
\begin{align*}
\hat{W}(y) := \sup_{\{y_{\hat{a}}, y_{\hat{b}}:y_{\hat{a}}\leq y\leq y_{\hat{b}}\}} \left[\hat{H}(y_{\hat{a}})\frac{y_{\hat{b}}-y}{y_{\hat{b}}-y_{\hat{a}}}+\hat{H}(y_{\hat{b}})\frac{y-y_{\hat{a}}}{y_{\hat{b}}-y_{\hat{a}}}\right].
\end{align*}
From this,  we seek to determine the candidate optimal entry interval $(y_{\hat{a}^*},y_{\hat{b}^*})$ in the $y = \hat{\psi}(x)$ coordinate. Following  the proof of Theorem \ref{thm:V} with the new functions  $\hat{F}$, $\hat{G}$, $\hat{\psi}$,  $\hat{H}$, and $\hat{W}$, the value function of the optimal entry timing problem  admits the expression
\begin{align}\label{generalJ}
J(x) = \hat{G}(x)\hat{W}(\hat{\psi}(x)).
\end{align}

An alternative  way to solve for     $V(x)$  and $J(x)$  is to look for the solutions to  the pair of variational inequalities
\begin{align}
\min\{rV(x)-\L V(x),V(x)-(x-c)\}&=0,\label{VIV}\\
\min\{\hat{r}J(x)-\L J(x),J(x)-(V(x)-x-\hat{c})\}&=0,\label{VIJ}
\end{align}
for $x \in \R$.
With sufficient regularity  conditions, this approach can verify  that the solutions    to the VIs, $V(x)$  and $J(x)$,  indeed correspond to the optimal stopping problems (see, for example, Theorem 10.4.1 of \cite{Oksendal2003}). Nevertheless, this approach does not immediately suggest candidate optimal timing strategies  or  value functions, and typically begins with a conjecture on the structure of the optimal stopping times, followed by verification.  In contrast,   our approach  allows us to directly construct the value functions, at the cost of analyzing the properties of  $H$,  $W$,  $\hat{H}$,  and $\hat{W}$. 


\section{Analytical Results}\label{sect-OU}
We will first study the optimal exit timing in Section \ref{sect-OU-exit}, followed by the optimal entry timing problem in Section \ref{sect-OU-entry}.

\subsection{Optimal Exit Timing}\label{sect-OU-exit}
We now analyze the optimal exit timing problem  \eqref{V1a}. In preparation for the next result,  we 
summarize the crucial properties of $H$.


\begin{lemma}\label{lm:HOU}
The function $H$ is continuous on $[0,+\infty)$, twice differentiable on $(0,+\infty)$ and possesses the following properties:
\begin{enumerate}[(i)]
\item \label{HOU0} $H(0) = 0$, and
\begin{align*}
H(y) \begin{cases}
<0 &\, \textrm{ if }\, y\in (0,\psi(c)),\\
>0 &\, \textrm{ if }\, y \in (\psi(c),+\infty).
\end{cases}
\end{align*}
\item \label{HOU1} Let $x^*$ be the unique solution to $G(x) -
    (x-c)G'(x) =0$. Then,  we have
\begin{align*}
H(y) \textrm{ is strictly}
\begin{cases}
\textrm{decreasing} &\, \textrm{ if }\, y \in (0, \psi(x^*)),\\
\textrm{increasing} &\, \textrm{ if }\, y \in (\psi(x^*),+\infty),
\end{cases}
\end{align*}and $x^*< c\wedge L^*$ with
\begin{align} L^* = \frac{\mu\theta+rc}{\mu+r}.\label{Lstar1}\end{align}

\item \label{HOU2} \begin{align*}H(y) \textrm{ is }
\begin{cases}
\textrm{convex} &\, \textrm{ if }\, y \in (0, \psi(L^*)],\\
\textrm{concave} &\, \textrm{ if }\, y \in [\psi(L^*),+\infty).
\end{cases}
\end{align*}
\end{enumerate}
\end{lemma}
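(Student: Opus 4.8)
The plan is to carry all three claims through the change of variable $y = \psi(x)$ and reduce them to sign analyses of two explicit functions of $x$: the quantity
\[
q(x) \;:=\; G(x) - (x-c)G'(x),
\]
which will control the monotonicity of $H$, and $(\L h - rh)(x)$ with $h(x)=x-c$, which will control its convexity. First I would record the preliminaries: $\psi = F/G$ is a $C^\infty$ bijection of $\R$ onto $(0,+\infty)$ with $\psi' > 0$ (since $F'>0>G'$ give $\psi' = (GF'-G'F)/G^2 > 0$) and $\psi(-\infty)=0$, $\psi(+\infty)=+\infty$; since moreover $G>0$ is $C^\infty$, the composition $H = (h/G)\circ\psi^{-1}$ is $C^\infty$ on $(0,+\infty)$, hence twice differentiable there. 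Continuity at $0$ amounts to $\lim_{x\to-\infty}(x-c)/G(x)=0$, which I would get from the crude lower bound $G(x) \ge C\, e^{\sqrt{2\mu/\sigma^2}(\theta-x)}$ for $x$ sufficiently negative, obtained by restricting the integral in \eqref{GOU} to $u\in[1,2]$; this growth dominates $|x-c|$.

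For (i): since $(h(x))^+ = 0$ for every $x < c$, the defining limit gives $H(0)=0$; and for $y>0$, $G>0$ forces $\operatorname{sign} H(y) = \operatorname{sign}\!\big(\psi^{-1}(y) - c\big)$, which is $<0$ precisely when $\psi^{-1}(y) < c$, i.e. $y < \psi(c)$, and $>0$ for $y>\psi(c)$. For the monotonicity in (ii), differentiating the composition gives $H'(\psi(x)) = (h/G)'(x)/\psi'(x)$; since $\psi'>0$ and $(h/G)'(x) = q(x)/G(x)^2$, the sign of $H'$ is the sign of $q$. Now $q'(x) = -(x-c)G''(x)$ with $G''>0$, so $q$ is strictly increasing on $(-\infty,c)$ and strictly decreasing on $(c,+\infty)$; and $q(x) > 0$ for every $x\ge c$ since there $-(x-c)G'\ge 0$ while $G>0$. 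Moreover $q$ must be strictly negative somewhere on $(-\infty,c)$: otherwise $q\ge 0$ everywhere, so $H$ would be nondecreasing, contradicting $H(0)=0 > H(y)$ for $y\in(0,\psi(c))$ from (i). Hence $q$ has a unique zero $x^*$, it lies in $(-\infty,c)$ (so $x^*<c$), with $q<0$ on $(-\infty,x^*)$ and $q>0$ on $(x^*,+\infty)$; this gives the claimed strict monotonicity of $H$ with switching point $\psi(x^*)$.

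The remaining part of (ii), namely $x^* < L^*$, then reduces to showing $q(L^*) > 0$. By definition of $L^*$ in \eqref{Lstar1} we have $(\L h - rh)(L^*) = 0$, i.e. $\mu(\theta-L^*) = r(L^*-c)$, so $L^*-c = \tfrac{\mu}{r}(\theta-L^*)$; plugging this into $\L G = rG$ evaluated at $L^*$, which reads $\mu(\theta-L^*)G'(L^*) = rG(L^*) - \tfrac{\sigma^2}{2}G''(L^*)$, yields
\[
q(L^*) \;=\; G(L^*) - (L^*-c)G'(L^*) \;=\; G(L^*) - \tfrac{1}{r}\Big(rG(L^*) - \tfrac{\sigma^2}{2}G''(L^*)\Big) \;=\; \frac{\sigma^2}{2r}\,G''(L^*) \;>\;0,
\]
so indeed $x^* < L^*$, and hence $x^* < c\wedge L^*$. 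For (iii), differentiating once more and using $\L F = rF$, $\L G = rG$ together with $m'(x) = -\tfrac{2\mu(\theta-x)}{\sigma^2}\,m(x)$, where $m := GF'-G'F > 0$ so that $\psi' = m/G^2$, leads after simplification to the identity
\[
H''(\psi(x)) \;=\; \frac{2\,G(x)^3}{\sigma^2\, m(x)^2}\,(\L h - rh)(x).
\]
Since $h(x)=x-c$ gives $(\L h - rh)(x) = \mu(\theta-x) - r(x-c) = (\mu+r)(L^*-x)$, the right-hand side is positive for $x<L^*$ and negative for $x>L^*$, i.e. $H''>0$ on $(0,\psi(L^*))$ and $H''<0$ on $(\psi(L^*),+\infty)$; this is (iii).

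The steps I expect to be most delicate are, first, the $x\to-\infty$ asymptotics — establishing continuity of $H$ at $0$ and, implicitly, the existence of $x^*$ — which I would settle via the explicit lower bound on $G$ from \eqref{GOU} indicated above; and second, the bookkeeping in deriving the formula for $H''$, which is mechanical but lengthy and relies on eliminating $F''$ and $G''$ through $\L F=rF$ and $\L G=rG$. The one genuinely useful observation is the evaluation $q(L^*) = \tfrac{\sigma^2}{2r}G''(L^*)$, which converts the inequality $x^*<L^*$ into an immediate consequence of the strict convexity of $G$.
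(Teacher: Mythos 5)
Your proposal is correct and follows essentially the same route as the paper's proof in Appendix A.2: monotonicity is reduced to the sign of $G(x)-(x-c)G'(x)$, convexity to the sign of $(\L-r)h(x)=(\mu+r)(L^*-x)$, and the key inequality $x^*<L^*$ comes from evaluating that quantity at $L^*$ via the ODE $\L G=rG$ (the paper works with the equivalent function $u(x)=(x-c)-G(x)/G'(x)$, which is globally strictly increasing, rather than your $q$, and gets the behaviour as $x\to-\infty$ directly instead of by contradiction with part (i), but these are cosmetic differences).
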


Based on  Lemma \ref{lm:HOU}, we sketch $H$ in Figure \ref{fig:HOU}.  The properties of $H$ are essential  in deriving the value function  and optimal liquidation level, as we show next.

\begin{figure}[th!]
\begin{center}
 {\scalebox{0.35}{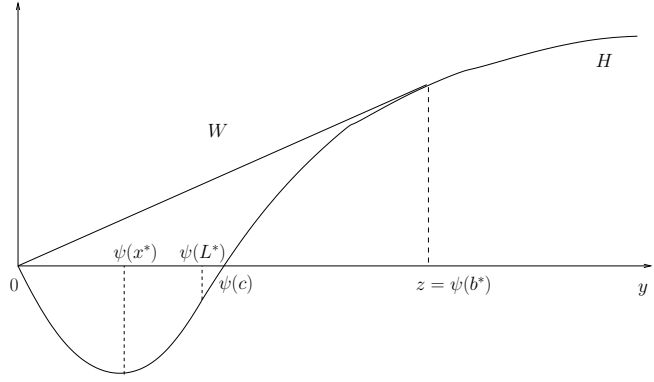}}
\end{center}
\caption{{\small Sketches  of $H$ and $W$. By Lemma \ref{lm:HOU}, $H$ is convex on the left of  $\psi(L^*)$ and concave on the right. The smallest concave majorant  $W$ is a straight line  tangent to $H$ at $z$ on $[0,z)$, and coincides with $H$ on $[z,+\infty)$. }}
\label{fig:HOU}
\end{figure}

\begin{theorem}\label{thm:optLiquOU}
The optimal liquidation problem
\eqref{V1a} admits the solution
\begin{align}V(x) =
\begin{cases}
(b^*-c)\frac{F(x)}{F(b^*)} &\, \textrm{ if }\, x\in (-\infty,b^*),\\
x-c &\, \textrm{ otherwise,}
\end{cases}\label{VOUsol}\end{align}
where the optimal liquidation level $b^*$  is found from the equation
\begin{align}\label{eq:solvebOU}
F(b) = (b-c)F'(b),
\end{align}
and is bounded below by $L^* \vee c$. The corresponding optimal
liquidation time is given by
\begin{align}\tau^* = \inf\{ t\ge 0 \,:\, X_t \ge b^*\}.\label{taustar}
\end{align}
\end{theorem}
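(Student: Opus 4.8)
The plan is to invoke Theorem~\ref{thm:V}, which already reduces the optimal liquidation problem to identifying $W$, the smallest concave majorant of $H$ on $[0,\infty)$, and then reading off $V(x)=G(x)W(\psi(x))$. So the whole task is to pin down the shape of $W$ from the properties of $H$ collected in Lemma~\ref{lm:HOU}, and then translate the answer back through the change of variables $y=\psi(x)$.

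\emph{Constructing $W$.} By Lemma~\ref{lm:HOU}, $H(0)=0$, $H<0$ on $(0,\psi(c))$ and $H>0$ on $(\psi(c),\infty)$, $H$ decreases then increases (minimum at $\psi(x^*)$), and $H$ is strictly convex on $(0,\psi(L^*)]$ and strictly concave on $[\psi(L^*),\infty)$. First I would record the asymptotics $H(y)\to+\infty$ but $H(y)/y\to 0$ as $y\to\infty$, which follows from $H(\psi(x))=(x-c)/G(x)$ together with $G(x)\downarrow 0$ and $\psi(x)\uparrow\infty$ as $x\uparrow\infty$. Since $H(y)/y<0$ on $(0,\psi(c))$, $H(y)/y\to 0^+$ at infinity, and $H>0$ on $(\psi(c),\infty)$, the number $\sup_{y>0}H(y)/y$ is positive and attained at an interior point $z$. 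To see $z$ is unique and to locate it, I would study $q(y):=yH'(y)-H(y)$, for which $q'(y)=yH''(y)$ and $q(0^+)=0$; hence $q$ is strictly increasing on $(0,\psi(L^*))$ and strictly decreasing on $(\psi(L^*),\infty)$, so $y\mapsto H(y)/y$ is strictly increasing on $(0,\psi(L^*)]$ and has exactly one critical point — the maximizer $z$ — in $(\psi(L^*),\infty)$; moreover $z>\psi(c)$ since $H(z)>0$. At $z$ the line through the origin with slope $H(z)/z$ is tangent to $H$, i.e.\ $H'(z)=H(z)/z$. I would then set
\[
W(y)=\begin{cases}\dfrac{H(z)}{z}\,y,&0\le y\le z,\\[4pt]H(y),&y\ge z,\end{cases}
\]
and check the three defining properties: $W$ is concave (linear up to $z$, then concave with matching slope $H'(z)=H(z)/z$ and $H'$ decreasing beyond $z$); $W\ge H$ (on $[0,z]$ by the choice of $z$ as the maximizer of $H/y$, trivially on $[z,\infty)$); and $W$ is smallest — any concave majorant $\widetilde W$ satisfies $\widetilde W(0)\ge H(0)=0$ and $\widetilde W(z)\ge H(z)$, hence on $[0,z]$ dominates the chord from $(0,0)$ to $(z,H(z))$, which is precisely our line, while on $[z,\infty)$ it dominates $H=W$.

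\emph{Translating back.} Put $z=\psi(b^*)$. A direct computation converts the tangency condition $H'(z)=H(z)/z$ into $F(b^*)=(b^*-c)F'(b^*)$, which is \eqref{eq:solvebOU}. Using $H(\psi(b))/\psi(b)=(b-c)/F(b)$, for $x<b^*$ one gets $V(x)=G(x)W(\psi(x))=G(x)\cdot\frac{b^*-c}{F(b^*)}\cdot\frac{F(x)}{G(x)}=(b^*-c)\frac{F(x)}{F(b^*)}$, while for $x\ge b^*$, $V(x)=G(x)H(\psi(x))=x-c$; this is \eqref{VOUsol}. The bound $b^*\ge L^*\vee c$ is just $z\ge\psi(L^*)\vee\psi(c)$ restated through the strictly increasing map $\psi$. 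Finally, $W$ agrees with $H$ exactly on $[z,\infty)$, so the continuation (delay) region for \eqref{V1a} is $(-\infty,b^*)$ and the stopping region is $[b^*,\infty)$; equivalently, the interval-exit scheme underlying Theorem~\ref{thm:V} here selects lower level $a^*=-\infty$ and upper level $b^*$, so by Remark~\ref{rmk:OUupper} the optimal time is $\tau^*=\inf\{t\ge 0:X_t\ge b^*\}$, giving \eqref{taustar}.

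I expect the crux to be the first part: verifying that $H$ is strictly sublinear at infinity so that the origin-tangent line exists, that the tangency point $z$ is unique, and — most relevant to the stated lower bound — that $z$ lies in the concave range $[\psi(L^*),\infty)$ rather than in the convex part (otherwise the proposed $W$ would fail to majorize $H$). The monotonicity analysis of $q(y)=yH'(y)-H(y)$ is what delivers all three at once. Once the geometry of $W$ is fixed, the formula for $V$, the equation for $b^*$, and the identification of $\tau^*$ are routine bookkeeping.
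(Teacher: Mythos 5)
Your proposal is correct and follows essentially the same route as the paper: invoke Theorem~\ref{thm:V}, use Lemma~\ref{lm:HOU} to show the smallest concave majorant $W$ is the origin-tangent line up to $z=\psi(b^*)$ and equals $H$ beyond it, and translate the tangency condition $H'(z)=H(z)/z$ into \eqref{eq:solvebOU}. The only difference is one of rigor, in your favor: your analysis of $q(y)=yH'(y)-H(y)$ supplies the existence, uniqueness, and location of $z$ that the paper justifies only by appeal to Lemma~\ref{lm:HOU}, the limit $H'(y)\to 0$, and Figure~\ref{fig:HOU}, and you also verify minimality of $W$ explicitly.
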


\begin{proof} From Lemma \ref{lm:HOU} and the fact that $H'(y)\to 0$ as $y\to +\infty$ (see also Figure \ref{fig:HOU}),  we infer that there exists a unique number $z > \psi(L^*)\vee \psi(c)$ such that \begin{align}\frac{H(z)}{z} = H'(z).\label{eq:HzzLiquOU}\end{align} In turn, the smallest concave majorant is given by
\begin{align}\label{Wy}
W(y) = \begin{cases}
y\frac{H(z)}{z} &\, \textrm{ if }\, y < z,\\
H(y) &\, \textrm{ if }\, y \geq z.
\end{cases}
\end{align}
Substituting  $b^* = \psi^{-1}(z)$ into \eqref{eq:HzzLiquOU}, we have the LHS
\begin{align}\label{fracHz}\frac{H(z)}{z} = \frac{H(\psi(b^*))}{\psi(b^*)} = \frac{b^*-c}{F(b^*)},\end{align}
and the RHS
\begin{align*}
H'(z) &= \frac{G(\psi^{-1}(z))-(\psi^{-1}(z)-c)G'(\psi^{-1}(z))}{F'(\psi^{-1}(z))G(\psi^{-1}(z)) - F(\psi^{-1}(z)) G'(\psi^{-1}(z))}= \frac{G(b^*) - (b^*-c)G'(b^*)}{F'(b^*)G(b^*)-F(b^*)G'(b^*)}.
\end{align*}
Equivalently, we can express  condition \eqref{eq:HzzLiquOU} in terms of $b^*$:
\begin{align*}
\frac{b^*-c}{F(b^*)} = \frac{G(b^*) - (b^*-c)G'(b^*)}{F'(b^*)G(b^*)-F(b^*)G'(b^*)},
\end{align*}
which can be further simplified to
\begin{align*}
F(b^*) = (b^*-c)F'(b^*).
\end{align*}
 Applying \eqref{fracHz} to \eqref{Wy}, we get
\begin{align}\label{WV1}
W(\psi(x)) = \begin{cases}
\psi(x)\frac{H(z)}{z} = \frac{F(x)}{G(x)}\frac{b^*-c}{F(b^*)} &\, \textrm{ if }\, x < b^*,\\
H(\psi(x)) = \frac{x-c}{G(x)} &\, \textrm{ if }\, x \geq b^*.
\end{cases}
\end{align}
In turn, we obtain the value function $V(x)$ by substituting \eqref{WV1} into   \eqref{generalV}.\end{proof}

Next, we examine the dependence of the investor's optimal timing strategy on the transaction cost  $c$.


\begin{proposition}\label{prop:bc}
The value function $V(x)$ of \eqref{V1a} is decreasing in  the transaction cost $c$ for every $x\in \R$, and the   optimal liquidation level $b^*$ is increasing  in  $c$.
\end{proposition}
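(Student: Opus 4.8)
The plan is to establish both monotonicity claims by elementary comparison arguments built directly on the representations already derived, avoiding any differentiation in $c$ of the implicitly-defined $b^*$. For the value function, I would argue directly from the optimal stopping formulation \eqref{V1a}: fix $c_1<c_2$ and let $\tau$ be any stopping time. Then $\E_x\{e^{-r\tau}(X_\tau-c_2)\}\le \E_x\{e^{-r\tau}(X_\tau-c_1)\}$ pointwise in the sense that the integrands differ by $(c_2-c_1)e^{-r\tau}\ge 0$; taking suprema over $\tau\in\setT$ gives $V(x;c_2)\le V(x;c_1)$ for every $x$. (One should note $\E_x\{e^{-r\tau}\}$ is finite, e.g.\ bounded by $1$, so the subtraction is legitimate.) This is the easy half.

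For the monotonicity of $b^*$, the natural route is to analyze the defining equation \eqref{eq:solvebOU}, $F(b)=(b-c)F'(b)$, equivalently $q(b):=\dfrac{F(b)}{F'(b)}-(b-c)=0$. Since $F'>0$ this is well-defined, and from $F,F',F''>0$ one computes $q'(b)=1-\dfrac{F(b)F''(b)}{(F'(b))^2}-1=-\dfrac{F(b)F''(b)}{(F'(b))^2}<0$, so $q$ is strictly decreasing in $b$; meanwhile $q$ is strictly increasing in $c$ (coefficient $+1$). By Theorem \ref{thm:optLiquOU} the root $b^*(c)$ exists and is the unique zero of $q(\cdot;c)$; the implicit function theorem (or a direct sign-chase: increasing $c$ pushes $q$ up, and since $q$ is decreasing in $b$ the zero moves right) then yields $\dfrac{db^*}{dc}=-\dfrac{\partial_c q}{\partial_b q}=\dfrac{1}{F(b^*)F''(b^*)/(F'(b^*))^2}>0$. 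Hence $b^*$ is strictly increasing in $c$.

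The main obstacle, such as it is, is bookkeeping rather than conceptual: one must make sure the root $b^*(c)$ of \eqref{eq:solvebOU} is genuinely the \emph{unique} relevant solution and varies continuously/differentiably in $c$ so that the implicit-function argument applies on the whole range of $c$. This is already underwritten by the analysis behind Theorem \ref{thm:optLiquOU} — $z$ is the unique point with $H(z)/z=H'(z)$ and $b^*=\psi^{-1}(z)$ — so I would simply invoke that uniqueness, observe $q(\cdot;c)$ inherits a single sign change from the strict monotonicity just computed, and conclude. If one prefers to avoid even the mild smoothness hypothesis, the sign-chase version (for $c_1<c_2$, $q(b^*(c_1);c_2)=q(b^*(c_1);c_1)+(c_2-c_1)>0=q(b^*(c_2);c_2)$, and $q(\cdot;c_2)$ strictly decreasing forces $b^*(c_1)<b^*(c_2)$) gives strict monotonicity with no differentiability needed at all, which is the cleaner writeup.
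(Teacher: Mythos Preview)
Your proposal is correct and follows essentially the same approach as the paper: the monotonicity of $V$ in $c$ is argued by direct comparison of the rewards $\E_x\{e^{-r\tau}(X_\tau-c)\}$, and the monotonicity of $b^*$ is obtained by implicit differentiation of the defining equation \eqref{eq:solvebOU}, yielding (after using $F(b^*)=(b^*-c)F'(b^*)$) exactly the paper's formula ${b^*}'(c)=F'(b^*)/[(b^*-c)F''(b^*)]>0$. Your additional sign-chase alternative is a clean bonus that bypasses smoothness of $c\mapsto b^*(c)$, but the core argument is the same.
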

\begin{proof} For  any $x\in\R$ and $\tau \in \setT$,   the corresponding expected discounted reward,  $\E_x\{e^{-r\tau}(X_{\tau}-c)\} = \E_x\{e^{-r\tau}X_{\tau}\} -c\,\E_x\{e^{-r\tau} \}$, is decreasing in $c$. This implies that $V(x)$ is also decreasing in $c$.  Next, we treat the optimal threshold $b^*(c)$ as a function of $c$,   and differentiate  \eqref{eq:solvebOU} w.r.t. $c$ to get
\begin{align*}
{b^{*}}^\prime\!(c) = \frac{F'(b^*)}{(b^*-c)F''(b^*)} >0 .
\end{align*}
Since $F'(x)>0$, $F''(x)>0$ (see  \eqref{FOU}),  and $b^*>c$ according to  Theorem \ref{thm:optLiquOU}, we conclude that  $b^{*}$ is increasing in $c$.
\end{proof}

In other words, if the transaction cost is high, the investor would tend to liquidate at a higher level, in order to compensate the loss on transaction cost. For other parameters, such as $\mu$ and $\sigma$, the dependence of $b^*$ is generally not monotone.

\subsection{Optimal Entry Timing}\label{sect-OU-entry}
Having solved for the optimal exit timing, we now turn to the optimal entry timing problem. In this case, the value function is
\begin{align}\label{JOUd}
J(x) =  \sup_{\nu \in \setT }\E_x\{e^{-\hat{r} \nu} ( V(X_{\nu})  - X_{\nu} - \hat{c})\},\quad x \in \R,
\end{align}
where $V(x)$   is given by Theorem \ref{thm:optLiquOU}.



To solve for the optimal entry threshold(s), we will need several  properties of $\hat{H}$, as we summarize below.

\begin{lemma}\label{lm:hatHOU}
The function $\hat{H}$ is continuous on $[0,+\infty)$, differentiable on $(0,+\infty)$, and twice differentiable on $(0,\hat{\psi}(b^*)) \cup (\hat{\psi}(b^*),+\infty)$, and possesses the following properties:
\begin{enumerate}[(i)]
\item \label{hatHOU0} $\hat{H} (0) = 0$.
Let $\bar{d}$ denote the unique solution to $\hat{h}(x) =0$, then $\bar{d} < b^*$ and
\begin{align*}
\hat{H}(y) \begin{cases}
>0 &\, \textrm{ if }\, y \in (0, \hat{\psi}(\bar{d})),\\
< 0 &\, \textrm{ if }\, y \in (\hat{\psi}(\bar{d}), +\infty).
\end{cases}
\end{align*}
\item \label{hatHOU1}
$\hat{H}(y)$ is strictly decreasing if $y \in (\hat{\psi}(b^*),+\infty)$.
\item \label{hatHOU2}
Let $\underline{b}$ denote the unique solution to $(\L - \hat{r})\hat{h}(x)=0$, then $\underline{b} < L^*$ and
\begin{align*}
\hat{H}(y) \textrm{ is }
\begin{cases}
\textrm{concave} &\, \textrm{ if }\, y \in (0, \hat{\psi}(\underline{b})),\\
\textrm{convex} &\, \textrm{ if }\, y \in (\hat{\psi}(\underline{b}),+\infty).
\end{cases}
\end{align*}
\end{enumerate}
\end{lemma}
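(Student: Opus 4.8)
\emph{Proof proposal.}

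The plan is to reduce all three parts to the behaviour of the scalar function $\hat h(x)=V(x)-x-\hat c$, which by Theorem~\ref{thm:optLiquOU} is explicit: $\hat h(x)=\frac{b^*-c}{F(b^*)}F(x)-x-\hat c$ on $(-\infty,b^*)$ and $\hat h(x)\equiv-(c+\hat c)$ on $[b^*,\infty)$. From the smooth-fit identity \eqref{eq:solvebOU} one gets $V\in C^1(\R)$, with $V$ being $C^2$ off $b^*$ and $V''$ dropping to $0$ there, so $\hat h$ inherits this regularity; composing with the strictly increasing smooth bijection $\hat\psi:\R\to(0,\infty)$ (with $\hat\psi(x)\to0$ as $x\to-\infty$) then shows $\hat H\in C^1((0,\infty))$ and $\hat H\in C^2\bigl((0,\hat\psi(b^*))\cup(\hat\psi(b^*),\infty)\bigr)$. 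For $\hat H(0)=0$ I would note that $V(x)=\frac{b^*-c}{F(b^*)}F(x)\to0$ as $x\to-\infty$, so $\hat h(x)\to+\infty$ there with only linear growth; hence $(\hat h)^+=\hat h$ near $-\infty$, and since $\hat G$ grows super-polynomially (Gaussian-type) in $x$ there, the limit in \eqref{generalhatH} vanishes, which also makes $\hat H$ continuous at $0$.

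For part (i), on $(-\infty,b^*)$ the derivative $\hat h'(x)=\frac{b^*-c}{F(b^*)}F'(x)-1$ is strictly increasing (as $F''>0$), tends to $-1$ as $x\to-\infty$, and equals $0$ at $b^{*-}$ by \eqref{eq:solvebOU}; hence $\hat h'<0$ on $(-\infty,b^*)$, so $\hat h$ decreases strictly from $+\infty$ to $\hat h(b^{*-})=-(c+\hat c)<0$, giving a unique zero $\bar d<b^*$ with $\hat h>0$ on $(-\infty,\bar d)$ and $\hat h<0$ on $(\bar d,b^*)$; since $\hat h\equiv-(c+\hat c)<0$ on $[b^*,\infty)$, in fact $\hat h<0$ on all of $(\bar d,\infty)$. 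Because $\hat G>0$ and $\hat\psi^{-1}$ is increasing, $\hat H(y)$ has the sign of $\hat h(\hat\psi^{-1}(y))$, which is the stated sign pattern with threshold $\hat\psi(\bar d)$. For part (ii), for $y>\hat\psi(b^*)$ we have $x=\hat\psi^{-1}(y)>b^*$ with $\hat h'(x)=0$ and $\hat h(x)=-(c+\hat c)$, so from the formula $\hat H'(y)=\bigl(\hat h'\hat G-\hat h\hat G'\bigr)(x)/\hat w(x)$, with $\hat w=\hat F'\hat G-\hat F\hat G'>0$, we obtain $\hat H'(y)=(c+\hat c)\hat G'(x)/\hat w(x)<0$ since $\hat G'<0$.

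Part (iii) rests on the change-of-variables identity, obtained by differentiating $\hat H(\hat\psi(x))=\hat h(x)/\hat G(x)$ twice and using $\L\hat G=\hat r\hat G$, $\L F=rF$ and Abel's identity for the Wronskian $\hat w$: for $x=\hat\psi^{-1}(y)\ne b^*$,
\begin{align*}
\hat H''(y)=\frac{2\,\hat G(x)^3}{\sigma^2\,\hat w(x)^2}\,(\L-\hat r)\hat h(x),
\end{align*}
so that concavity/convexity of $\hat H$ is governed by the sign of $(\L-\hat r)\hat h$. On $(b^*,\infty)$ this equals $(\L-\hat r)\bigl(-(c+\hat c)\bigr)=\hat r(c+\hat c)>0$; on $(-\infty,b^*)$, using $\L V=rV$,
\begin{align*}
(\L-\hat r)\hat h(x)=q(x):=(r-\hat r)V(x)+(\mu+\hat r)x-\mu\theta+\hat r\hat c .
\end{align*}
Since $0<\hat r\le r$, $b^*>c$ and $F,F'>0$, the function $q$ is strictly increasing on $(-\infty,b^*)$ with $q\to-\infty$ at $-\infty$, hence has a unique zero $\underline b<b^*$; together with positivity of $(\L-\hat r)\hat h$ on $(b^*,\infty)$ and the fact that $\hat H$ is $C^1$ across $\hat\psi(b^*)$, this yields concavity of $\hat H$ on $(0,\hat\psi(\underline b))$ and convexity on $(\hat\psi(\underline b),\infty)$. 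To locate $\underline b$ I would evaluate $q$ at $L^*$: since $L^*$ is exactly the root of $(\L-r)(x-c)$, i.e. $(\mu+r)L^*=\mu\theta+rc$, the linear part of $q(L^*)$ collapses and one is left with $q(L^*)=(r-\hat r)\bigl(V(L^*)-L^*\bigr)+rc+\hat r\hat c$; then $V(L^*)\ge L^*-c$ and $r\ge\hat r$ give $q(L^*)\ge\hat r(c+\hat c)>0$, and since $q$ is increasing and $L^*\le b^*$ by Theorem~\ref{thm:optLiquOU}, the zero satisfies $\underline b<L^*$.

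The step I expect to be the main obstacle is this localization $\underline b<L^*$: the expression $(\L-\hat r)\hat h$ mixes the two discount rates and the obstacle $V$, so the inequality only closes after combining the defining equation of $L^*$ with the pointwise bound $V\ge x-c$. Setting up the second-derivative identity cleanly — in particular handling the kink of $V$, hence of $\hat h''$ and $\hat H''$, at $b^*$ — is the other place requiring care, but is otherwise routine.
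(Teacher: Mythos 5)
Your proposal is correct and follows essentially the same route as the paper's proof in Appendix A.3: the same smooth-fit regularity analysis at $b^*$, the same monotonicity argument for $\hat h$ on $(-\infty,b^*)$ yielding $\bar d$, the same sign computation for $\hat H'$ beyond $\hat\psi(b^*)$, and the same reduction of convexity to the sign of $(\L-\hat r)\hat h$ with the bound $V\ge x-c$ closing the inequality $\underline b<L^*$. The only cosmetic differences are your growth-comparison argument for $\hat H(0)=0$ (the paper uses a quotient-of-derivatives limit) and evaluating $q$ only at $L^*$ plus monotonicity rather than bounding it on all of $[L^*,b^*)$.
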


\begin{figure}[th]
\begin{center}

 {\scalebox{0.35}{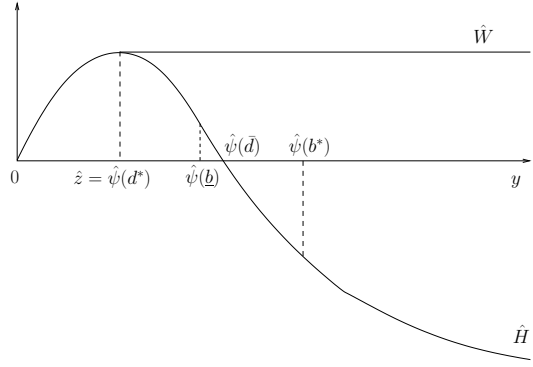}}
\end{center}
\caption{\small{Sketches of $\hat{H}$ and $\hat{W}$. The function $\hat{W}$ coincides with $\hat{H}$ on $[0,\hat{z}]$ and is equal to the constant $\hat{H}(\hat{z})$ on $(\hat{z},+\infty)$. }}
\label{fig:hatHOU}
\end{figure}

In Figure  \ref{fig:hatHOU}, we give a sketch of $\hat{H}$ according to  Lemma \ref{lm:hatHOU}.
 This will be useful for deriving the optimal entry level.
\begin{theorem}\label{thm:optEntryOU}
The optimal entry timing problem
\eqref{J1a} admits the solution
\begin{align}\label{JOUsol}
  J(x) =
\begin{cases}
  V(x)-x-\hat{c} &\, \textrm{ if }\, x \in (-\infty,d^*],\\
  \frac{V(d^*)-d^*-\hat{c}}{\hat{G}(d^*)}\hat{G}(x) &\, \textrm{ if }\, x \in (d^*, +\infty),
\end{cases}
\end{align}
where the optimal entry level $d^*$ is found from the equation
\begin{align}
\label{eq:solvedOU}
\hat{G}(d)(V'(d)  - 1) = \hat{G}'(d)(V(d)-d-\hat{c}).
\end{align}
\end{theorem}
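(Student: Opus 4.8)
\textbf{Proof proposal for Theorem \ref{thm:optEntryOU}.}

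The plan is to repeat the concave-majorant machinery of Theorem \ref{thm:V}, now applied to the entry problem with the functions $\hat F$, $\hat G$, $\hat\psi$, $\hat H$, $\hat W$, and to read off the geometry of $\hat W$ from the shape of $\hat H$ described in Lemma \ref{lm:hatHOU}. First I would note that $\hat H(0)=0$, that $\hat H$ is concave on $(0,\hat\psi(\underline b))$ and convex on $(\hat\psi(\underline b),+\infty)$, and that $\hat H$ is strictly decreasing (in fact $\to \hat H(\infty)\le 0$ and eventually negative past $\hat\psi(\bar d)$) on $(\hat\psi(b^*),+\infty)$; combined with $\hat H(0)=0$ and $\hat H>0$ on $(0,\hat\psi(\bar d))$, this forces the smallest concave majorant $\hat W$ to have the following form: on an initial interval $[0,\hat z]$ it coincides with $\hat H$ (which is concave and increasing there, then turns over), and on $(\hat z,+\infty)$ it is the constant $\hat H(\hat z)$, where $\hat z$ is the unique point at which $\hat H'(\hat z)=0$, i.e. the interior maximizer of $\hat H$. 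The existence and uniqueness of such $\hat z$ follows because $\hat H$ increases from $0$, is concave then convex, and is eventually strictly decreasing, so it has exactly one critical point, a maximum, lying in $(\hat\psi(\underline b),\hat\psi(b^*))$ (strictly below $\hat\psi(\bar d)$ so that $\hat H(\hat z)>0$). I would give the short argument that the flat-then-$\hat H$ curve is indeed concave (the left piece is concave by Lemma \ref{lm:hatHOU}(iii) up to the max, and at $\hat z$ the derivative drops continuously from $0$ to $0$, matching the constant) and is the \emph{smallest} such majorant (any concave majorant must lie above $\hat H$ hence above $\hat H(\hat z)$ for $y\ge\hat z$, and must lie above the chord/tangent structure on $[0,\hat z]$; since $\hat H$ itself is concave there it is already its own smallest majorant on that piece).

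Next I would translate back to the $x$-coordinate. Writing $d^* := \hat\psi^{-1}(\hat z)$, the relation $\hat H'(\hat z)=0$ becomes, after the chain rule computation analogous to the one for $H'$ in the proof of Theorem \ref{thm:optLiquOU},
\[
\hat H'(\hat z)=\frac{\hat G(d^*)\hat h'(d^*)-\hat h(d^*)\hat G'(d^*)}{\hat F'(d^*)\hat G(d^*)-\hat F(d^*)\hat G'(d^*)}=0,
\]
so $\hat G(d^*)\hat h'(d^*)=\hat h(d^*)\hat G'(d^*)$; substituting $\hat h(x)=V(x)-x-\hat c$, $\hat h'(x)=V'(x)-1$ gives exactly \eqref{eq:solvedOU}. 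Then, exactly as in \eqref{WV1}, for $x\le d^*$ we have $\hat W(\hat\psi(x))=\hat H(\hat\psi(x))=\hat h(x)/\hat G(x)$, so $J(x)=\hat G(x)\hat W(\hat\psi(x))=\hat h(x)=V(x)-x-\hat c$; and for $x>d^*$ we have $\hat W(\hat\psi(x))=\hat H(\hat z)=\hat h(d^*)/\hat G(d^*)$, so $J(x)=\hat G(x)\,\hat h(d^*)/\hat G(d^*)=\frac{V(d^*)-d^*-\hat c}{\hat G(d^*)}\hat G(x)$. This is \eqref{JOUsol}. That $J$ given by \eqref{generalJ} really is the value function — i.e. optimality of the candidate stopping time $\nu^*=\inf\{t\ge 0: X_t\le d^*\}$ — is precisely the content of "following the proof of Theorem \ref{thm:V} with $\hat F,\hat G,\hat\psi,\hat H,\hat W$," so I would either invoke that verbatim or re-run the verification argument (concavity of $\hat W$ gives the supermartingale property of $e^{-\hat r t}\hat G(X_t)\hat W(\hat\psi(X_t))$, dominating the reward, while along $\nu^*$ equality holds).

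The main obstacle, and the only place real work is needed, is establishing the precise shape of $\hat W$ from Lemma \ref{lm:hatHOU} — in particular verifying that the interior maximizer $\hat z$ of $\hat H$ exists, is unique, and satisfies $\hat\psi(\underline b)<\hat z<\hat\psi(b^*)$ (so the "concave then convex" transition happens before the max and the "eventually decreasing" behavior kicks in after it), together with $\hat H(\hat z)>0$ so that the constant level is the right majorant rather than $0$. Concretely: on $(0,\hat\psi(\underline b))$ $\hat H$ is concave, and I must check $\hat H'(0^+)>0$ (or at least that $\hat H$ is not immediately decreasing) so that $\hat H$ rises from $0$; then concavity forces $\hat H'$ to be decreasing there, convexity on $(\hat\psi(\underline b),+\infty)$ would a priori allow $\hat H'$ to increase again, but the extra input that $\hat H$ is \emph{strictly decreasing} on all of $(\hat\psi(b^*),+\infty)$ pins $\hat H'<0$ there, so $\hat H'$ has exactly one sign change, at $\hat z\in(\hat\psi(\underline b),\hat\psi(b^*))$. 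Positivity $\hat H(\hat z)>0$ follows since $\hat z<\hat\psi(b^*)<\hat\psi(\bar d)$ would need to be checked — more carefully, $\hat H>0$ on $(0,\hat\psi(\bar d))$ and one shows $\hat z<\hat\psi(\bar d)$ because $\hat H$ is already decreasing on $(\hat\psi(b^*),\hat\psi(\bar d))$ hence the max is attained to the left. Once these monotonicity/curvature bookkeeping facts are in hand, everything else is the routine change-of-variables and substitution shown above.
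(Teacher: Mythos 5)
Your overall route is the same as the paper's: build the smallest concave majorant $\hat W$ of $\hat H$, identify $\hat z$ as the unique zero of $\hat H'$, set $d^*=\hat\psi^{-1}(\hat z)$, and translate back via $J=\hat G\cdot(\hat W\circ\hat\psi)$. The change-of-variables computation yielding \eqref{eq:solvedOU} and the substitution giving \eqref{JOUsol} are exactly what the paper does. However, the one step you single out as requiring real work — locating the maximizer $\hat z$ — is where your argument goes wrong, and in a way that is internally inconsistent. You place $\hat z$ in $(\hat\psi(\underline b),\hat\psi(b^*))$, but by Lemma \ref{lm:hatHOU}\eqref{hatHOU2} that interval lies in the \emph{convex} region of $\hat H$. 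If $\hat z$ were there, your proposed $\hat W$ (equal to $\hat H$ on all of $[0,\hat z]$, constant afterwards) would fail to be concave on $[0,\hat z]$, so it could not be the smallest concave majorant — which contradicts your own justification that ``the left piece is concave by Lemma \ref{lm:hatHOU}(iii) up to the max.'' You also describe $\hat H$ as ``concave then convex'' with the transition \emph{before} the max; that description and the claimed location of $\hat z$ cannot both hold.

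The correct localization is $\hat z\in(0,\hat\psi(\underline b))$, i.e.\ the maximum sits in the concave region (this is also what the paper's Figure \ref{fig:hatHOU} depicts). The argument is: on $(\hat\psi(\underline b),+\infty)$, $\hat H$ is convex, so $\hat H'$ is nondecreasing there; since $\hat H$ is strictly decreasing on $(\hat\psi(b^*),+\infty)$ by Lemma \ref{lm:hatHOU}\eqref{hatHOU1}, $\hat H'<0$ at large $y$, and monotonicity of $\hat H'$ then forces $\hat H'\le 0$ on the \emph{entire} convex region, including $(\hat\psi(\underline b),\hat\psi(b^*))$. Hence the unique critical point must lie in $(0,\hat\psi(\underline b))$, where $\hat H$ rises from $\hat H(0)=0$ (so $\hat H'>0$ near $0$) and is concave (so $\hat H'$ decreases through $0$ exactly once). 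With $\hat z$ in the concave region, your description of $\hat W$ and the verification of concavity go through as written. A secondary slip: you write $\hat\psi(b^*)<\hat\psi(\bar d)$, but Lemma \ref{lm:hatHOU}\eqref{hatHOU0} gives $\bar d<b^*$, so the order is reversed; the positivity $\hat H(\hat z)>0$ needs no such ordering anyway, since $\hat z$ is the global maximizer and $\hat H>0$ on the nonempty interval $(0,\hat\psi(\bar d))$. With these corrections the proof matches the paper's.
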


\begin{proof} We look for the value function of  the form: $J(x) = \hat{G}(x)\hat{W}(\hat{\psi}(x))$, where $\hat{W}$ is the the smallest concave majorant of $\hat{H}$.  From Lemma \ref{lm:hatHOU} and  Figure \ref{fig:hatHOU},  we infer that there exists a unique number $\hat{z} < \hat{\psi}(b^*)$ such that
\begin{align}\label{eq:HzEntryOU}
\hat{H}'(\hat{z}) = 0.
\end{align}
This implies that
\begin{align}\label{hatW}
\hat{W}(y) = \begin{cases}
\hat{H}(y) &\, \textrm{ if }\, y \leq \hat{z},\\
\hat{H}(\hat{z}) &\, \textrm{ if }\, y > \hat{z}.
\end{cases}
\end{align}
Substituting $d^* = \hat{\psi}^{-1}(\hat{z})$ into \eqref{eq:HzEntryOU}, we have
\[ \hat{H}'(\hat{z}) = \frac{\hat{G}(d^*)(V'(d^*)  - 1) - \hat{G}'(d^*)(V(d^*)-d^*-\hat{c})}
{\hat{F}'(d^*)\hat{G}(d^*)-\hat{F}(d^*)\hat{G}'(d^*)}=0,\]
which is equivalent to condition \eqref{eq:solvedOU}. Furthermore, using \eqref{generalpsih} and \eqref{generalhatH}, we get
\begin{align}\label{hatHz} \hat{H}(\hat{z}) = \frac{V(d^*)-d^*-\hat{c}}{\hat{G}(d^*)}.\end{align}
To conclude, we   substitute   $\hat{H}(\hat{z})$ of \eqref{hatHz} and  $\hat{H}(y)$ of  \eqref{generalhatH} into  $\hat{W}$ of  \eqref{hatW}, which by \eqref{generalJ} yields  the value function $J(x)$ in \eqref{JOUsol}.
\end{proof}



With the analytic solutions for $V$ and $J$, we can verify  by direct substitution that   $V(x)$ in  \eqref{VOUsol} and $J(x)$  in \eqref{JOUsol} satisfy both \eqref{VIV} and \eqref{VIJ}.

Since the optimal entry timing problem is  nested with another optimal
stopping problem, the parameter dependence of the optimal entry level is
complicated. Below, we illustrate the impact of transaction cost.

\begin{proposition}\label{prop:dstar}
The optimal entry level $d^*$ of
\eqref{J1a} is  decreasing  in  the  transaction cost $\hat{c}$.
\end{proposition}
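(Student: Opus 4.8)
The plan is to mimic the proof of Proposition \ref{prop:bc}: I would regard the optimal entry level $d^*=d^*(\hat c)$ as an implicit function of $\hat c$ through the first-order condition \eqref{eq:solvedOU} and differentiate. So set
\[
\Phi(d,\hat c)\;:=\;\hat G(d)\bigl(V'(d)-1\bigr)-\hat G'(d)\bigl(V(d)-d-\hat c\bigr),
\]
so that $\Phi(d^*(\hat c),\hat c)=0$. Two structural facts make this legitimate: the exit value function $V$ of \eqref{VOUsol} does not depend on $\hat c$ at all, and by the proof of Theorem \ref{thm:optEntryOU} one has $d^*<b^*$, so in a neighbourhood of $d^*$ the function $V$ coincides with the smooth branch $(b^*-c)F(\cdot)/F(b^*)$ and is $C^2$ there (and stays so under small perturbations of $\hat c$, since $b^*$ does not depend on $\hat c$). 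The implicit function theorem would then give $\dfrac{d d^*}{d\hat c}=-\,\partial_{\hat c}\Phi\big/\partial_d\Phi$ evaluated at $(d^*,\hat c)$, once the denominator is shown to be nonzero.

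The numerator is immediate: $\partial_{\hat c}\Phi=\hat G'(d^*)<0$, since $\hat G$ is strictly decreasing. For the denominator, the cross terms $\pm\,\hat G'(d)\bigl(V'(d)-1\bigr)$ cancel, leaving
\[
\partial_d\Phi\big|_{d^*}\;=\;\hat G(d^*)\,V''(d^*)-\hat G''(d^*)\bigl(V(d^*)-d^*-\hat c\bigr).
\]
This is a difference of two strictly positive quantities --- $V''>0$ on the branch $x<b^*$, $\hat G''>0$, and $V(d^*)-d^*-\hat c=\hat h(d^*)>0$ because $d^*<\bar d$ by Lemma \ref{lm:hatHOU}(i) --- so its sign is not visible term by term. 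This is the crux of the argument.

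To resolve it, I would feed the first-order condition itself, $\hat G(d^*)\bigl(V'(d^*)-1\bigr)=\hat G'(d^*)\hat h(d^*)$, into the expression $(\L-\hat r)\hat h(d^*)=\frac{\sigma^2}{2}V''(d^*)+\mu(\theta-d^*)\bigl(V'(d^*)-1\bigr)-\hat r\,\hat h(d^*)$ (recall $\hat h=V-x-\hat c$, hence $\hat h'=V'-1$, $\hat h''=V''$), and then use the ODE $(\L-\hat r)\hat G=0$ to replace $\mu(\theta-d^*)\hat G'(d^*)-\hat r\hat G(d^*)$ by $-\frac{\sigma^2}{2}\hat G''(d^*)$. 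This collapses to the identity $\partial_d\Phi|_{d^*}=\frac{2\hat G(d^*)}{\sigma^2}\,(\L-\hat r)\hat h(d^*)$. Now Lemma \ref{lm:hatHOU}(iii) identifies $\underline b$ as the unique zero of $(\L-\hat r)\hat h$ and places it with $\underline b<L^*\le b^*$; since $(\L-\hat r)\hat h(x)=\hat r(c+\hat c)>0$ for $x\ge b^*$ (using the standing assumption $c+\hat c>0$), the function $(\L-\hat r)\hat h$ must be strictly negative on the whole interval $(-\infty,\underline b)$. Finally $d^*<\underline b$ --- equivalently, $\hat z=\hat\psi(d^*)$ lies strictly inside the concave part of $\hat H$, as established in the proof of Theorem \ref{thm:optEntryOU} and in Figure \ref{fig:hatHOU} --- so $(\L-\hat r)\hat h(d^*)<0$ and hence $\partial_d\Phi|_{d^*}<0$, in particular nonzero.

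Combining everything, $\dfrac{d d^*}{d\hat c}=-\dfrac{\hat G'(d^*)}{\partial_d\Phi|_{d^*}}<0$, as the numerator and denominator of this fraction are both negative; hence $d^*$ is strictly decreasing in $\hat c$. The step I expect to be genuinely delicate is precisely the denominator sign: a naive term-by-term estimate of $\partial_d\Phi|_{d^*}$ is inconclusive, so one must play the first-order condition against the ODE for $\hat G$ to recast $\partial_d\Phi|_{d^*}$ as a positive multiple of $(\L-\hat r)\hat h(d^*)$, whose negativity then comes from Lemma \ref{lm:hatHOU}(iii) together with the location $d^*<\underline b$ (i.e.\ the strict concavity of $\hat H$ at $\hat z$).
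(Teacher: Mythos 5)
Your proof is correct, and it reaches the same reduction as the paper --- both differentiate the first-order condition \eqref{eq:solvedOU} implicitly in $\hat c$ and arrive at the quantity $\hat G(d^*)V''(d^*)-\hat G''(d^*)\bigl(V(d^*)-d^*-\hat c\bigr)$ as the denominator whose sign decides everything --- but you settle that sign by a genuinely different mechanism. The paper argues geometrically: $J$ equals $\hat h$ on $(-\infty,d^*]$ and equals the $\hat G$-multiple $\hat f(x)=\frac{V(d^*)-d^*-\hat c}{\hat G(d^*)}\hat G(x)$ on $(d^*,+\infty)$, with $\hat f\ge \hat h$ there and smooth pasting at $d^*$, whence $\hat h''(d^*)\le\hat f''(d^*)$, i.e.\ the denominator is $\le 0$. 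You instead play the first-order condition against the ODE $(\L-\hat r)\hat G=0$ to identify the denominator as $\frac{2\hat G(d^*)}{\sigma^2}(\L-\hat r)\hat h(d^*)$ (this identity checks out), and then invoke Lemma \ref{lm:hatHOU}(iii) together with the location $d^*<\underline b$, which indeed holds: $\hat H'$ is nondecreasing on the convex region $(\hat\psi(\underline b),+\infty)$ and negative beyond $\hat\psi(b^*)$, so the zero $\hat z$ of $\hat H'$ must lie in the concave part. Your route buys a \emph{strict} inequality, which the paper's tangency argument by itself does not deliver, and strictness is what actually licenses the implicit differentiation in the first place (the paper's formula \eqref{dc} divides by this very quantity); in that respect your version is the more complete one. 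The two arguments are cousins --- negativity of $(\L-\hat r)\hat h(d^*)$ is precisely strict concavity of $\hat H$ at the tangency point $\hat z$, which is the analytic shadow of the paper's majorization picture --- but the paper reads it off the shape of $\hat W$ while you extract it from the classified sign of $(\L-\hat r)\hat h$.
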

\begin{proof}
 Considering the optimal entry level  $d^*$ as a function of $\hat{c}$, we differentiate   \eqref{eq:solvedOU} w.r.t. $\hat{c}$ to get
\begin{align}
{d^{*}}^\prime\!(\hat{c}) = \frac{-\hat{G}'(d^*)}{\hat{G}(d^*)}[V''(d^*)-\frac{V(d^*)-d^*-\hat{c}}{\hat{G}(d^*)}\hat{G}''(d^*)]^{-1}.\label{dc}
\end{align}
Since $\hat{G}(d^*) >0$ and $\hat{G}'(d^*)<0$, the sign of {$d^{*'}\!(\hat{c})$} is determined by $V''(d^*)-\frac{V(d^*)-d^*-\hat{c}}{\hat{G}(d^*)}\hat{G}''(d^*)$. Denote $\hat{f}(x) = \frac{V(d^*)-d^*-\hat{c}}{\hat{G}(d^*)}\hat{G}(x)$. Recall that $\hat{h}(x) = V(x)-x-\hat{c}$,
\begin{align*}
J(x)=\begin{cases}
\hat{h}(x) &\, \textrm{ if }\, x \in (-\infty, d^*],\\
\hat{f}(x)>\hat{h}(x) &\, \textrm{ if }\, x \in (d^*,+\infty),
\end{cases}
\end{align*}
and $ \hat{f}(x)$ smooth pastes  $\hat{h}(x)$ at $d^*$.  Since both $\hat{h}(x)$ and $ \hat{f}(x)$ are positive decreasing convex functions, it follows that $\hat{h}''(d^*)\leq \hat{f}''(d^*)$. Observing that  $\hat{h}''(d^*) = V''(d^*)$ and $\hat{f}''(d^*)= \frac{V(d^*)-d^*-\hat{c}}{\hat{G}(d^*)}\hat{G}''(d^*)$, we have $V''(d^*)-\frac{V(d^*)-d^*-\hat{c}}{\hat{G}(d^*)}\hat{G}''(d^*) \leq 0$. Applying this to \eqref{dc}, we conclude that  ${d^{*}}^\prime\!(\hat{c}) \leq 0$.\end{proof}

We end this section with a special example in the OU model with  no mean
reversion. \begin{remark}\label{rmk:BM}  If we set  $\mu =0$ in  \eqref{XOU},
with $r$ and $\hat{r}$ fixed, it follows  that   $X$
reduces to a Brownian motion: $X_{t}=  \sigma B_{t}$,  $t\ge 0$. In this case, the
optimal liquidation level $b^*$ for problem \eqref{V1a} is \begin{align*}
b^* = c+\frac{\sigma}{\sqrt{2r}}, \end{align*} and the optimal entry
level $d^*$ for problem \eqref{J1a} is the root  to the equation
\begin{align*} \left(1+\sqrt{\frac{\hat{r}}{r}}
\right)e^{\frac{\sqrt{2r}}{\sigma}(d-c-\frac{\sigma}{\sqrt{2r}})} =
\frac{\sqrt{2\hat{r}}}{\sigma}(d+\hat{c})+1, \quad d \in (-\infty, b^*). \end{align*} \end{remark}

\section{Incorporating Stop-Loss Exit}\label{sect-stoploss}Now we consider the optimal entry and exit problems with a stop-loss constraint. For convenience, we restate the value functions  from \eqref{J1} and \eqref{V1}:
\begin{align}
J_L(x) &=  \sup_{\nu \in \setT }\E_x\left\{e^{-\hat{r} \nu} ( V_L(X_{\nu})  - X_{\nu} - \hat{c})\right\}, \label{J2}\\
V_L(x) &= \sup_{\tau \in \setT}\E_x\left\{e^{-r (\tau\wedge \tau_L)}(X_{\tau\wedge \tau_L} - c) \right\}. \label{V2}
\end{align} After solving for the optimal timing strategies, we will also examine the dependence of  the optimal liquidation threshold   on the stop-loss level $L$.

\subsection{Optimal Exit Timing}
We first give an analytic solution to the optimal exit timing problem.
\begin{theorem}\label{thm:optLiquOUSL}
The optimal liquidation problem
\eqref{V2} with stop-loss level $L$ admits the solution
\begin{align}\label{VOUSLsol}V_L(x) =
\begin{cases}
CF(x)+DG(x) &\, \textrm{ if }\, x\in (L,b_L^*),\\
x-c &\, \textrm{ otherwise},
\end{cases}\end{align}
where
\begin{align}\label{eq:CDOUSL}
C = \frac{(b_L^*-c)G(L) - (L-c)G(b_L^*) }{F(b_L^*)G(L) -F(L)G(b_L^*) },\quad
D = \frac{(L-c)F(b_L^*)-(b_L^*-c)F(L)}{F(b_L^*)G(L) -F(L)G(b_L^*)}.
\end{align}
The optimal liquidation level  $b_L^*$ is found from the equation
\begin{align}\label{eq:solvebOUSL}
[(L-c)G(b)-(b-c)G(L)]F'(b) &+ [(b-c)F(L)-(L-c)F(b)]G'(b) = G(b)F(L)-G(L)F(b).
\end{align}
\end{theorem}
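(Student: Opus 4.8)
The plan is to adapt the concave-majorant method of Theorem~\ref{thm:optLiquOU} to the case where the lower exit level is frozen at the stop-loss level $L$. If $x\le L$ then $\tau_L=0$, so $V_L(x)=x-c$ and there is nothing to prove; hence assume $x>L$. Since $X$ is stopped the instant it reaches $L$, problem~\eqref{V2} is an optimal stopping problem for $X$ absorbed at $L$, i.e.\ living on $[L,\infty)$. Passing to the coordinate $y=\psi(x)=F(x)/G(x)$ exactly as in \eqref{EH1}--\eqref{EH3}, but now taking the lower level to be $a=L$ rather than a free parameter, the expected discounted payoff of the interval strategy ``exit at the first time $X$ reaches $b$, or at $\tau_L$'' equals $G(\psi^{-1}(y))\big[H(\psi(L))\frac{\psi(b)-y}{\psi(b)-\psi(L)}+H(\psi(b))\frac{y-\psi(L)}{\psi(b)-\psi(L)}\big]$, with $H$ as in Lemma~\ref{lm:HOU}. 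Maximizing over $b$ (equivalently over $y_b=\psi(b)\ge y$) and running the argument of Theorem~\ref{thm:V} on the half-line $[\psi(L),\infty)$ gives $V_L(x)=G(x)W_L(\psi(x))$, where $W_L$ is the smallest concave majorant of $H$ on $[\psi(L),\infty)$ subject to the pinning condition $W_L(\psi(L))=H(\psi(L))$ forced by the stop-loss (note $G(L)H(\psi(L))=L-c$, so continuity of $V_L$ at $L$ is exactly this constraint).

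The heart of the proof is to identify $W_L$ explicitly. Using Lemma~\ref{lm:HOU} --- $H$ is convex on $(0,\psi(L^*)]$, concave and increasing on $[\psi(L^*),\infty)$ with $H'(y)\to 0$ --- I would show, in the nondegenerate case $b_L^*>L$, that the chord-slope map $y\mapsto (H(y)-H(\psi(L)))/(y-\psi(L))$ is increasing on the convex part and eventually decreasing to $0$, hence attains a maximum at a unique $z_L\in(\psi(L^*)\vee\psi(L),\infty)$ characterized by the tangency condition
\begin{align}\label{eq:tangSL}
H'(z_L)=\frac{H(z_L)-H(\psi(L))}{z_L-\psi(L)}.
\end{align}
Then $W_L$ is the straight line through $(\psi(L),H(\psi(L)))$ and $(z_L,H(z_L))$ on $[\psi(L),z_L]$ and coincides with $H$ on $[z_L,\infty)$: it majorizes $H$ because a convex arc lies below the chord joining its endpoints and a concave arc lies below each of its tangent lines, and it is the smallest such because any concave $\tilde W\ge H$ with $\tilde W(\psi(L))=H(\psi(L))$ dominates the chord from $(\psi(L),H(\psi(L)))$ to $(z_L,\tilde W(z_L))$, hence dominates $W_L$. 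Setting $b_L^*:=\psi^{-1}(z_L)$ and undoing the transformation, $G(x)W_L(\psi(x))$ is affine in $F(x)$ and $G(x)$ on $(L,b_L^*)$; matching the two endpoint values $CF(L)+DG(L)=L-c$ and $CF(b_L^*)+DG(b_L^*)=b_L^*-c$ produces precisely the constants $C,D$ in \eqref{eq:CDOUSL} by Cramer's rule, while on $[z_L,\infty)$ one gets $G(x)H(\psi(x))=x-c$.

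It then remains to turn \eqref{eq:tangSL} into \eqref{eq:solvebOUSL}. As in the computation in the proof of Theorem~\ref{thm:optLiquOU}, write $H(\psi(L))=(L-c)/G(L)$, $H(z_L)=(b_L^*-c)/G(b_L^*)$ and
\[
H'(z_L)=\frac{G(b_L^*)-(b_L^*-c)G'(b_L^*)}{F'(b_L^*)G(b_L^*)-F(b_L^*)G'(b_L^*)},
\]
substitute into \eqref{eq:tangSL}, clear denominators and simplify; the $G(b_L^*)G'(b_L^*)$ term and the $F$--$G$ cross terms collapse, leaving \eqref{eq:solvebOUSL}. As a consistency check --- in the same spirit as the direct verification carried out for $V$ and $J$ after Theorem~\ref{thm:optEntryOU} --- one substitutes \eqref{VOUSLsol} into $\min\{rV_L-\L V_L,\,V_L-(x-c)\}=0$ on $(L,\infty)$ together with $V_L(L)=L-c$, confirming optimality via a standard verification argument for the constrained problem.

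The step I expect to be the main obstacle is the rigorous identification of the smallest concave majorant under the extra pinning constraint at $\psi(L)$, together with showing that the tangency point $z_L$ falls in the concave region of $H$ for the relevant range of $L$ --- which is what guarantees that \eqref{eq:solvebOUSL} actually has the asserted root. One also needs a remark on the degenerate regime $b_L^*=L$, in which immediate liquidation is optimal at every price level and the first branch of \eqref{VOUSLsol} is vacuous.
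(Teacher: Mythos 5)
Your proposal is correct and follows essentially the same route as the paper's proof: restrict to the domain $[\psi(L),\infty)$, construct the smallest concave majorant $W_L$ pinned at $(\psi(L),H(\psi(L)))$ as the tangent line from that point to $H$ at $z_L=\psi(b_L^*)$ (using the convex-then-concave shape from Lemma \ref{lm:HOU}), and translate the tangency condition into \eqref{eq:solvebOUSL} and the endpoint-matching into \eqref{eq:CDOUSL}. The paper likewise treats $L\ge L^*$ as the degenerate case where $W_L=H$ and $V_L(x)=x-c$ everywhere.
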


\begin{figure}[th]
\begin{center}
 \scalebox{0.34}{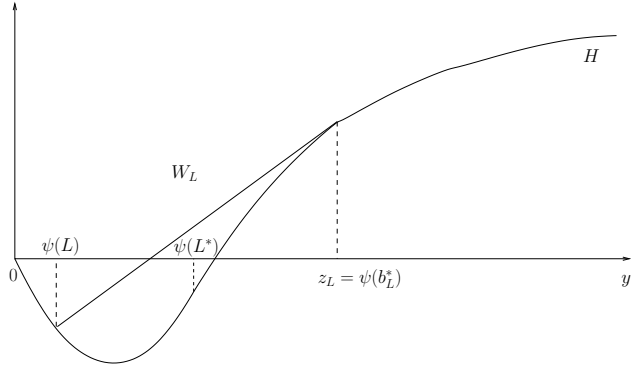}
\end{center}
\caption{\small{Sketch of $W_L$. On $[0,\psi(L)]\cup [z_L,+\infty)$, $W_L$ coincides with $H$, and over $(\psi(L),z_L)$, $W_L$ is a straight line  tangent to $H$ at $z_L$ .}} \label{figWL}
\end{figure}

\begin{proof}Due to the stop-loss level $L$,  we consider  the smallest
concave majorant of $H(y)$, denoted by $W_L(y)$,  over the restricted
domain $[\psi(L),+\infty)$ and  set $W_L(y) = H(y)$ for $y \in [0,\psi(L)]$.

From   Lemma \ref{lm:HOU} and Figure \ref{figWL}, we   see that $H(y)$
is convex over $(0, \psi(L^*)]$ and concave in $[\psi(L^*),+\infty)$.   If $L
\geq L^*$, then $H(y)$ is concave over $[\psi(L),+\infty)$, which implies
that $W_L(y) = H(y)$ for $y\ge 0$, and thus $V_L(x) = x-c$ for $x\in \R$.
On the other hand, if $L < L^*$, then $H(y)$ is convex on
$[\psi(L),\psi(L^*)]$, and concave strictly increasing on $[\psi(L^*),+\infty)$. There
exists a unique number $z_L > \psi(L^*)$ such that
\begin{align}\label{eq:HzLiquOUSL}
\frac{H(z_L) - H(\psi(L))}{z_L-\psi(L)} = H'(z_L).
\end{align}
In turn, the smallest concave majorant admits the form:
\begin{align}\label{eq:WLy}
W_L(y) = \begin{cases}
H(\psi(L)) + (y-\psi(L))H'(z_L) &\, \textrm{ if }\, y \in (\psi(L), z_L),\\
H(y) &\, \textrm{ otherwise. }
\end{cases}
\end{align}

Substituting $b_L^* = \psi^{-1}(z_L)$ into \eqref{eq:HzLiquOUSL}, we
have from the LHS
\begin{align*}
\frac{H(z_L) - H(\psi(L))}{z_L-\psi(L)} = \frac{H(\psi(b_L^*))- H(\psi(L))}{\psi(b_L^*)-\psi(L)}
= \frac{\frac{b_L^*-c}{G(b_L^*)} -\frac{L-c}{G(L)} }{ \frac{F(b_L^*)}{G(b_L^*)}- \frac{F(L)}{G(L)}}
= C,
\end{align*}
and the RHS
\begin{align*}
H'(z_L) &= \frac{G(\psi^{-1}(z_L))-(\psi^{-1}(z_L)-c)G'(\psi^{-1}(z_L))}{F'(\psi^{-1}(z_L))G(\psi^{-1}(z_L)) - F(\psi^{-1}(z_L)) G'(\psi^{-1}(z_L))}\\
&= \frac{G(b^*_L) - (b^*-c)G'(b^*_L)}{F'(b^*_L)G(b^*_L)-F(b^*_L)G'(b^*_L)}.
\end{align*}
Therefore, we can equivalently   express  \eqref{eq:HzLiquOUSL} in terms of $b_L^*$:
\begin{align*}
\frac{(b_L^*-c)G(L) - (L-c)G(b_L^*) }{F(b_L^*)G(L) -F(L)G(b_L^*) }  = \frac{G(b_L^*) - (b_L^*-c)G'(b_L^*)}{F'(b_L^*)G(b_L^*)-F(b_L^*)G'(b_L^*)},
\end{align*}
which by rearrangement immediately simplifies to   \eqref{eq:solvebOUSL}.

Furthermore, for $x \in (L, b_L^*)$, $H'(z_L) = C$ implies that
\begin{align*}
W_L(\psi(x)) = H(\psi(L))+ (\psi(x) -\psi(L))C.
\end{align*}
Substituting this to $V_L(x) = G(x)W_L(\psi(x))$, the value function becomes
\begin{align*}
V_L(x) &= G(x) \big[H(\psi(L))+ (\psi(x) -\psi(L))C\big] = CF(x) + G(x)\big[H(\psi(L)) - \psi(L)C\big],
\end{align*}
which resembles  \eqref{VOUSLsol} after the observation that
\begin{align*}
H(\psi(L)) - \psi(L)C &= \frac{L-c}{G(L)} - \frac{F(L)}{G(L)} \frac{(b_L^*-c)G(L) - (L-c)G(b_L^*) }{F(b_L^*)G(L) -F(L)G(b_L^*) }\\
&= \frac{(L-c)F(b_L^*)-(b_L^*-c)F(L)}{F(b_L^*)G(L) -F(L)G(b_L^*)} = D.
\end{align*}\end{proof}

We can  interpret the investor's timing strategy in terms of three price intervals, namely,  the liquidation
region $[b^*_L, +\infty)$, the delay region $(L, b^*_L)$, and the
stop-loss region $(-\infty, L]$. In both liquidation and stop-loss regions, the
value function $V_L(x) = x-c$, and therefore, the investor will immediately
close out the position.    From the proof of Theorem
\ref{thm:optLiquOUSL}, if $L \geq L^* = \frac{\mu \theta + r c}{\mu +
r}$ (see \eqref{Lstar1}), then $V_L(x) = x-c$, $\forall x\in \R$. In other
words, if the stop-loss level is too high, then the delay  region
completely disappears, and the investor will liquidate immediately for every
initial value $x \in \R$.

\begin{corollary}\label{thm:optb} If   $L <  L^*$, then there
exists a unique solution $b_L^* \in (L^*, +\infty)$ that solves
\eqref{eq:solvebOUSL}. If $L  \ge  L^*$, then  $V_L(x) = x-c$, for $x\in
\R$.  \end{corollary}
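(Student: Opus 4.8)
The plan is to extract Corollary~\ref{thm:optb} from the proof of Theorem~\ref{thm:optLiquOUSL}, supplying the one ingredient used there only implicitly: the existence and uniqueness of the tangency point $z_L$ solving \eqref{eq:HzLiquOUSL}. The case $L\ge L^*$ needs nothing new. Since $\psi=F/G$ is strictly increasing (from $F'>0>G'$ we get $\psi'=(F'G-FG')/G^2>0$), $L\ge L^*$ gives $\psi(L)\ge\psi(L^*)$, and by the concavity part of Lemma~\ref{lm:HOU} the function $H$ is concave on $[\psi(L^*),+\infty)\supseteq[\psi(L),+\infty)$; hence its smallest concave majorant on the restricted domain $[\psi(L),+\infty)$ is $H$ itself, $W_L\equiv H$, and $V_L(x)=G(x)W_L(\psi(x))=G(x)H(\psi(x))=x-c$ for all $x\in\R$---precisely the dichotomy already noted inside the proof of Theorem~\ref{thm:optLiquOUSL}.

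For $L<L^*$ I would establish that \eqref{eq:HzLiquOUSL} has a unique root $z_L\in(\psi(L^*),+\infty)$; the equivalent form \eqref{eq:solvebOUSL} in the variable $b_L^*:=\psi^{-1}(z_L)$, and the bound $b_L^*>L^*$, then follow from the algebraic simplification already performed in the proof of Theorem~\ref{thm:optLiquOUSL} together with the strict monotonicity of $\psi^{-1}$. Define
\[
\phi(y):=H(y)-H(\psi(L))-H'(y)\,(y-\psi(L)),\qquad y>\psi(L^*),
\]
so that \eqref{eq:HzLiquOUSL} is exactly $\phi(z_L)=0$. Then $\phi'(y)=-H''(y)(y-\psi(L))>0$ on $(\psi(L^*),+\infty)$, since $H$ is strictly concave there and $y>\psi(L^*)>\psi(L)$; thus $\phi$ is strictly increasing and has at most one zero. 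At the left endpoint, strict convexity of $H$ on $[\psi(L),\psi(L^*)]$ (Lemma~\ref{lm:HOU}) puts the graph of $H$ strictly above its tangent line at $\psi(L^*)$ when evaluated at $\psi(L)$, which rearranges to $\phi(\psi(L^*))<0$. As $y\to+\infty$, $H'(y)\to 0$ while $H(y)=(\psi^{-1}(y)-c)/G(\psi^{-1}(y))\to+\infty$ because $\psi^{-1}(y)\to+\infty$ and $G\to0^+$; if I can show this forces $\phi(y)\to+\infty$, the intermediate value theorem produces the unique $z_L$, completing the proof.

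The only genuinely delicate step is that last limit, because $H'(y)(y-\psi(L))$ is an indeterminate $0\cdot\infty$ form. I would handle it with the identity
\[
\phi(y)=\big(H(\psi(L^*))-H(\psi(L))\big)+\int_{\psi(L^*)}^{y}\big(H'(t)-H'(y)\big)\,dt-H'(y)\big(\psi(L^*)-\psi(L)\big),
\]
obtained by writing $H(y)=H(\psi(L^*))+\int_{\psi(L^*)}^{y}H'(t)\,dt$. For any fixed $t_0>\psi(L^*)$ and $y>t_0$, monotonicity of $H'$ gives $\int_{t_0}^{y}(H'(t)-H'(y))\,dt\ge0$, hence $\phi(y)\ge H(t_0)-H(\psi(L))-H'(y)(t_0-\psi(L))$. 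Letting $y\to+\infty$ yields $\liminf_{y\to+\infty}\phi(y)\ge H(t_0)-H(\psi(L))$, and then letting $t_0\to+\infty$ with $H(t_0)\to+\infty$ gives $\phi(y)\to+\infty$. All remaining manipulations are bookkeeping already contained in the proof of Theorem~\ref{thm:optLiquOUSL}.
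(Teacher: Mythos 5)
Your proposal is correct and follows essentially the same route as the paper: the corollary is extracted from the proof of Theorem \ref{thm:optLiquOUSL}, where the dichotomy at $L^*$ and the existence of a unique tangency point $z_L>\psi(L^*)$ solving \eqref{eq:HzLiquOUSL} are asserted from the convex--concave shape of $H$ (Lemma \ref{lm:HOU}) and Figure \ref{figWL}. Your only addition is to make that assertion rigorous via the function $\phi$, using strict concavity of $H$ on $(\psi(L^*),+\infty)$, the sign of $\phi$ at $\psi(L^*)$, and a careful treatment of the $0\cdot\infty$ limit using $H'(y)\to 0$ and $H(y)\to+\infty$; this is a sound and welcome filling-in of a step the paper leaves to the sketch.
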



The direct effect of a stop-loss exit constraint is  forced liquidation whenever the price process reaches $L$ before the upper liquidation level $b^*_L$. Interestingly, there is an additional indirect effect:    a higher stop-loss level will induce the investor to \emph{voluntarily} liquidate  earlier at a lower take-profit level.

\begin{proposition}\label{prop:bL}
The optimal liquidation level $b_L^*$ of
\eqref{V2}   strictly decreases as the stop-loss level $L$ increases.
\end{proposition}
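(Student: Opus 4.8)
The proposition concerns the regime $L<L^*$ only, since for $L\ge L^*$ there is no interior liquidation level (Corollary~\ref{thm:optb}); so fix $L<L^*$. The plan is to combine the tangency characterization of $b_L^*$ obtained in the proof of Theorem~\ref{thm:optLiquOUSL} with the implicit function theorem. Recall that $z_L:=\psi(b_L^*)$ is the unique solution of \eqref{eq:HzLiquOUSL}: the chord of $H$ joining $(\psi(L),H(\psi(L)))$ to $(z_L,H(z_L))$ is tangent to $H$ at $z_L$. Since $\psi=F/G$ has $\psi'=(F'G-FG')/G^2>0$ by \eqref{FOU}--\eqref{GOU} (as $F,F'>0>G'$ and $G>0$) and $\psi(L)>0$ for every $L$, the map $L\mapsto\eta:=\psi(L)$ is a strictly increasing bijection of $(-\infty,L^*)$ onto $(0,\psi(L^*))$, and $b_L^*=\psi^{-1}(z_L)$ with $\psi^{-1}$ strictly increasing; hence it suffices to prove that $z_L$ is strictly decreasing in $\eta\in(0,\psi(L^*))$.

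First I would recast \eqref{eq:HzLiquOUSL} as $\Phi(z_L,\eta)=0$, where $\Phi(z,\eta):=H(z)+(\eta-z)H'(z)-H(\eta)$ is the value at $\eta$ of the tangent line to $H$ at $z$, minus $H(\eta)$. A short computation gives $\partial_z\Phi=(\eta-z)H''(z)$ and $\partial_\eta\Phi=H'(z)-H'(\eta)$. By Lemma~\ref{lm:HOU} we have $z_L>\psi(L^*)>\eta$, and $H$ is (strictly) concave on $(\psi(L^*),\infty)$, so $H''(z_L)<0$ and $\partial_z\Phi(z_L,\eta)=(\eta-z_L)H''(z_L)>0$. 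This partial derivative being nonzero, the implicit function theorem makes $\eta\mapsto z_L$ (equivalently $L\mapsto b_L^*$) a $C^1$ map with
\[ \frac{dz_L}{d\eta}=-\frac{\partial_\eta\Phi(z_L,\eta)}{\partial_z\Phi(z_L,\eta)}=-\frac{H'(z_L)-H'(\eta)}{(z_L-\eta)\bigl(-H''(z_L)\bigr)}, \]
so everything reduces to the strict inequality $H'(\eta)<H'(z_L)$.

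That inequality is the crux, and I expect it to be the main obstacle. Here I would invoke not the tangency equation itself but the fact that $W_L$ is a \emph{majorant}: on $[\eta,z_L]$ it equals the tangent line $\ell(x):=H(\eta)+(x-\eta)H'(z_L)$ while $W_L\ge H$, so $q:=\ell-H\ge0$ on $[\eta,z_L]$ with $q(\eta)=0$; hence $q'(\eta^+)\ge0$, i.e.\ $H'(\eta)\le H'(z_L)$. To exclude equality, suppose $H'(\eta)=H'(z_L)$; then $q(\eta)=q'(\eta)=0$ with $q\ge0$ just to the right of $\eta$. Since $\eta<\psi(L^*)$, $H$ is convex on a right-neighbourhood of $\eta$ (Lemma~\ref{lm:HOU}), so $q=\ell-H$ is concave there, and a concave function vanishing to second order at $\eta$ is $\le0$ near $\eta$; thus $q\equiv0$ on a right-neighbourhood of $\eta$, making $H$ affine on an interval. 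This is impossible because $H$ is real-analytic on $(0,\infty)$ --- $h/G$ is real-analytic and $\psi$ is real-analytic with $\psi'>0$, hence $\psi^{-1}$ is real-analytic --- and $H$ is not affine, by the shape described in Lemma~\ref{lm:HOU}. (Alternatively, the degenerate case is ruled out at once by strict convexity of $H$ on $(0,\psi(L^*))$.) Therefore $H'(\eta)<H'(z_L)$, whence $dz_L/d\eta<0$, and by the chain rule $\frac{db_L^*}{dL}=(\psi^{-1})'(z_L)\,\frac{dz_L}{d\eta}\,\psi'(L)<0$, as claimed.
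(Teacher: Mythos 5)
Your proposal is correct and follows essentially the same route as the paper: recasting the tangency condition \eqref{eq:HzLiquOUSL} and applying the implicit function theorem yields exactly the derivative formula \eqref{dzl}, and the sign analysis ($z_L>\psi(L)$, $H''(z_L)<0$, $H'(z_L)>H'(\psi(L))$) is identical. The only difference is that you justify the strict inequality $H'(\psi(L))<H'(z_L)$ in detail via the majorant property and strict convexity of $H$ on $(0,\psi(L^*))$, where the paper simply asserts it from the definitions of $W_L$ and $z_L$; this is a welcome, but not structurally different, elaboration.
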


\begin{proof}
Recall that $z_L = \psi(b_L^*)$ and $\psi$ is a strictly increasing function. Therefore,  it is sufficient to  show that
$z_L$ strictly decreases as   $\tilde{L} := \psi(L)$  increases. As such, we denote  $z_L(\tilde{L})$ to highlight  its dependence on $\tilde{L}$. Differentiating \eqref{eq:HzLiquOUSL} w.r.t. $\tilde{L}$ gives


\begin{align}
z'_L(\tilde{L}) = \frac{H'(z_L)-H'(\tilde{L})}{H''(z_L)(z_L-\tilde{L})}.\label{dzl}
\end{align}
It follows from the definitions of $W_L$ and $z_L$ that
$H'(z_L)>H'(\tilde{L})$ and $z_L>\tilde{L}$. Also, we have $H''(z)<0$
since $H$ is concave at $z_L$. Applying these to \eqref{dzl}, we conclude
that $z'_L(\tilde{L})<0$.
\end{proof}

Figure \ref{fig:admL} illustrates  the optimal exit price level   $b_L^*$ as a
function of the stop-loss levels $L$,  for different long-run means $\theta$.
When $b^*_L$ is strictly  greater than $L$ (on the left of the straight line),
the delay  region is non-empty. As $L$ increases, $b^*_L$ strictly decreases
and the two  meet at $L^*$ (on the straight line), and the delay 
region vanishes.

\begin{figure}[t!] \centering
\includegraphics[scale=0.7]{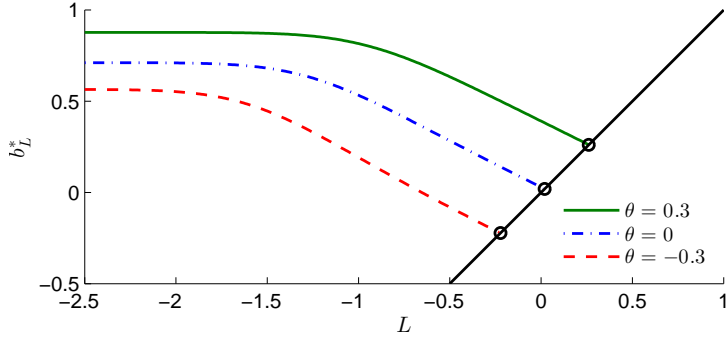} \caption{\small{The
optimal exit threshold $b_L^*$ is strictly  decreasing with respect to the stop-loss
level $L$.  The straight  line is where  $b_L^* = L$, and each
 of the three circles locates the critical stop-loss level  $L^*$.  }} \label{fig:admL}
\end{figure}

Also, there is  an interesting connection  between cases with different
long-run means  and transaction costs. To this end, let us denote the value
function by $V_L(x;\theta,c)$ to highlight the dependence on $\theta$ and
$c$, and  the corresponding optimal liquidation level by $b_L^*(\theta,c)$.
We find that, for any $\theta_1,\theta_2 \in \R$, $c_1, c_2 > 0$, $L_1 \le
 \frac{\mu\theta_1+rc_1}{\mu+r}$, and $L_2 \le \frac{\mu\theta_2+rc_2}{\mu+r}$,   the associated value functions and  optimal
liquidation levels satisfy the relationships: \begin{align}
V_{L_1}(x+\theta_1;\theta_1,c_1)
&=V_{L_2}(x+\theta_2;\theta_2,c_2),\label{Vequiv}\\
b_{L_1}^*(\theta_1,c_1) - \theta_1 &=
b_{L_2}^*(\theta_2,c_2) - \theta_2,\label{bequiv}
\end{align}as long as $\theta_1- \theta_2=
c_1-c_2 = L_1- L_2.$    These results \eqref{Vequiv} and
\eqref{bequiv} also hold in the case without stop-loss.

\subsection{Optimal Entry Timing}

 We now discuss the optimal entry timing  problem $J_L(x)$ defined in \eqref{J2}.   Since  $\sup_{x\in\R}( V_L(x)  -x - \hat{c})\leq 0$  implies that   $J_L(x) = 0$ for $x \in \R$,  we can   focus on the  case with
\begin{align}\label{generalassume} \sup_{x\in\R}( V_L(x)  -x - \hat{c})> 0,
\end{align}and look for  non-trivial optimal  timing strategies.

  Associated with reward function $\hat{h}_L(x) := V_L(x)-x-\hat{c}$ from entering the market, we define the function $\hat{H}_L$ as in \eqref{generalH} whose properties are summarized in the following lemma.




\begin{lemma}\label{lm:hatHOUSL}
The function $\hat{H}_L$ is continuous on $[0,+\infty)$, differentiable on $(0,\hat{\psi}(L)) \cup (\hat{\psi}(L), +\infty)$, twice differentiable on $(0,\hat{\psi}(L)) \cup (\hat{\psi}(L),\hat{\psi}(b^*_L)) \cup (\hat{\psi}(b^*_L),+\infty)$, and possesses the following properties:
\begin{enumerate}[(i)]
\item \label{hatHOUSL0}
$\hat{H}_L(0)=0$. $\hat{H}_L(y)<0$ for $y \in (0, \hat{\psi}(L)]\cup [\hat{\psi}(b_L^*),+\infty)$.

\item  \label{hatHOUSL1}
$\hat{H}_L(y)$ is strictly decreasing for $y \in (0, \hat{\psi}(L))\cup (\hat{\psi}(b_L^*),+\infty)$.

\item \label{hatHOUSL2} There exists some constant $\bar{d}_L \in (L,
    b_L^*)$ such that  $(\L - \hat{r})\hat{h}_L(\bar{d}_L)=0$, and
\begin{align*}
\hat{H}_L(y) \textrm{ is } \begin{cases}
\textrm{convex} &\, \textrm{ if }\, y\in (0, \hat{\psi}(L))\cup (\hat{\psi}(\bar{d}_L),+\infty),\\
\textrm{concave} &\, \textrm{ if }\, y \in (\hat{\psi}(L), \hat{\psi}(\bar{d}_L)).
\end{cases}
\end{align*}
In addition, $\hat{z}_1 \in (\hat{\psi}(L), \hat{\psi}(\bar{d}_L))$, where $\hat{z}_1 := \argmax_{y\in [0,+\infty)} \hat{H}_L(y)$.

\end{enumerate}
\end{lemma}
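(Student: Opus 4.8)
The plan is to exploit the explicit structure of $V_L$ from Theorem \ref{thm:optLiquOUSL}, which decomposes $\R$ into the stop-loss region $(-\infty,L]$, the delay region $(L,b_L^*)$, and the liquidation region $[b_L^*,+\infty)$, and to transfer analytic facts about $\hat h_L(x)=V_L(x)-x-\hat c$ through the change of variables $y=\hat\psi(x)$ to $\hat H_L$. Since $\hat\psi$ is a smooth strictly increasing bijection from $\R$ onto $(0,+\infty)$, continuity/differentiability statements for $\hat H_L$ on the three intervals $(0,\hat\psi(L))$, $(\hat\psi(L),\hat\psi(b_L^*))$, $(\hat\psi(b_L^*),+\infty)$ follow from the corresponding smoothness of $\hat h_L/\hat G$ on $(-\infty,L)$, $(L,b_L^*)$, $(b_L^*,+\infty)$: on the outer two regions $V_L(x)=x-c$ is affine (so $\hat h_L(x)=-c-\hat c$ is constant), and on the middle region $V_L=CF+DG$ is real-analytic; the only possible loss of a derivative occurs at $x=L$ and $x=b_L^*$, where $V_L$ is continuous and, by the smooth-fit property at $b_L^*$ built into \eqref{eq:solvebOUSL}, once differentiable but not necessarily twice. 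This already gives the regularity claim and $\hat H_L(0)=0$, since $\hat h_L(x)=-(c+\hat c)<0$ for $x\le L$ forces $(\hat h_L(x))^+\equiv 0$ there, hence the $y=0$ limit in the analogue of \eqref{generalhatH} vanishes.

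For part (i), I would observe that on $(-\infty,L]\cup[b_L^*,+\infty)$ we have $\hat h_L(x)=(x-c)-x-\hat c=-(c+\hat c)<0$ by the assumption $c+\hat c>0$, and since $\hat G>0$ this gives $\hat H_L(y)<0$ for $y\in(0,\hat\psi(L)]\cup[\hat\psi(b_L^*),+\infty)$. Part (ii) follows similarly: on those outer regions $\hat h_L\equiv-(c+\hat c)$ is a negative constant while $\hat G$ is strictly decreasing and positive, so $\hat h_L/\hat G$ is strictly decreasing in $x$; composing with the increasing bijection $\hat\psi^{-1}$, $\hat H_L$ is strictly decreasing on the corresponding $y$-intervals. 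Alternatively one can compute the derivative directly via the identity, used throughout the paper, that for $y=\hat\psi(x)$,
\[
\hat H_L'(y)=\frac{\hat G(x)\hat h_L'(x)-\hat G'(x)\hat h_L(x)}{\hat F'(x)\hat G(x)-\hat F(x)\hat G'(x)},
\]
whose denominator is the (positive) Wronskian of $\hat F,\hat G$; on the outer regions $\hat h_L'=0$ and $-\hat G'\hat h_L<0$, giving the sign.

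Part (iii) is the substantive part. The key computational fact is that for $y=\hat\psi(x)$ the sign of $\hat H_L''(y)$ equals the sign of $(\L-\hat r)\hat h_L(x)$ — this is the standard relation (essentially the reason $W$-type majorants detect the generator) that underlies Lemmas \ref{lm:HOU}(iii) and \ref{lm:hatHOU}(iii). On the outer regions $\hat h_L(x)=-(c+\hat c)$ so $(\L-\hat r)\hat h_L(x)=\hat r(c+\hat c)>0$, giving convexity of $\hat H_L$ on $(0,\hat\psi(L))$ and on $(\hat\psi(b_L^*),+\infty)$. On the delay region $(L,b_L^*)$ one has $\hat h_L(x)=V_L(x)-x-\hat c$ with $V_L=CF+DG$; applying $(\L-\hat r)$ and using $(\L-r)F=(\L-r)G=0$ one reduces $(\L-\hat r)\hat h_L(x)$ to an explicit expression, and the argument I expect to run is: $(\L-\hat r)\hat h_L$ is continuous on $[L,b_L^*]$, is strictly positive at the endpoints $x=L^+$ and $x=b_L^{*-}$ (matching the outer-region values by continuity of $\hat h_L$ and its first derivative), and is strictly negative somewhere in between — the latter because $\hat H_L$ must have an interior maximum (by part (ii), $\hat H_L$ decreases near $0$ and near $+\infty$, while $\sup_x(V_L(x)-x-\hat c)>0$ by \eqref{generalassume} forces $\hat H_L$ to be positive somewhere, so its maximizer $\hat z_1$ lies strictly inside $(\hat\psi(L),\hat\psi(b_L^*))$ and there $\hat H_L''(\hat z_1)\le 0$). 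A monotonicity/convexity analysis of the explicit form of $(\L-\hat r)\hat h_L$ on $(L,b_L^*)$ then shows it changes sign exactly once, at a point $\bar d_L$, yielding concavity on $(\hat\psi(L),\hat\psi(\bar d_L))$ and convexity on $(\hat\psi(\bar d_L),+\infty)$; uniqueness of $\bar d_L$ comes from showing $(\L-\hat r)\hat h_L$ is, say, strictly decreasing then increasing, or has negative derivative at any zero. The main obstacle is precisely establishing this single-crossing property of $(\L-\hat r)\hat h_L$ on the delay region from the implicit characterization \eqref{eq:solvebOUSL} of $b_L^*$ and \eqref{eq:CDOUSL} for $C,D$: the expression is not sign-definite a priori, and one must combine the boundary positivity with a second-order (convexity in $x$) argument for $(\L-\hat r)\hat h_L$ itself, using $F,G>0$, $F',G''>0$, $G'<0$, to pin down that it dips below zero on exactly one subinterval. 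Once that is in hand, locating $\hat z_1$ in $(\hat\psi(L),\hat\psi(\bar d_L))$ is immediate, since a maximizer of $\hat H_L$ must lie where $\hat H_L$ is not convex, i.e.\ in the concave sub-region.
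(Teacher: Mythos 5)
Your treatment of the regularity claims and of parts (i) and (ii) matches the paper's: on $(-\infty,L]\cup[b_L^*,+\infty)$ one has $\hat h_L\equiv-(c+\hat c)<0$, and the sign of $\hat H_L'$ there is that of $(c+\hat c)\hat G'(x)$, which gives both the negativity and the strict decrease. The outer-region convexity in (iii), via $(\L-\hat r)\hat h_L=\hat r(c+\hat c)>0$, is also correct and is exactly what the paper does.

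The gap is in the core of part (iii), and your sketch of how to close it contains a concrete error. You assert that $(\L-\hat r)\hat h_L$ is continuous on $[L,b_L^*]$ and \emph{strictly positive at $x=L^+$}, ``matching the outer-region values.'' It is not: $V_L$ fails to be differentiable at $L$ (as you yourself note when discussing regularity), so $(\L-\hat r)\hat h_L$ jumps there. On $(L,b_L^*)$ one has $\L V_L=rV_L$, whence $(\L-\hat r)\hat h_L(x)=(r-\hat r)V_L(x)+(\mu+\hat r)x-\mu\theta+\hat r\hat c$, and at $x=L^+$ this equals $(r+\mu)(L-L^*)+\hat r(c+\hat c)$, which is typically negative since $L<L^*$. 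Indeed the lemma asserts concavity on \emph{all} of $(\hat\psi(L),\hat\psi(\bar d_L))$, so $(\L-\hat r)\hat h_L$ must be negative immediately to the right of $L$; your picture (positive at both endpoints, dipping negative in the middle) would contradict the very statement being proved. The single-crossing property you flag as ``the main obstacle'' is not obtained in the paper by a second-order analysis of $(\L-\hat r)\hat h_L$; instead, the paper shows that $V_L$ is strictly increasing on $(L,b_L^*)$ — by proving $W_L'(y)=H'(z_L)>H'(z)=W'(y)$ (using $b_L^*<b^*$ from Proposition \ref{prop:bL} and concavity of $H$), which translates into $V_L'(x)-V'(x)=-\frac{G'(x)}{G(x)}(V(x)-V_L(x))>0$, hence $V_L'>V'>0$. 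Since $r\ge\hat r$, the displayed expression for $(\L-\hat r)\hat h_L$ is then strictly increasing on $(L,b_L^*)$; it is positive near $b_L^*$ (same computation as in Lemma \ref{lm:hatHOU}(iii) because $b_L^*>L^*$) and must be negative somewhere (the interior maximizer $\hat z_1$ forces a concave subinterval), so it crosses zero exactly once at $\bar d_L$ and is negative on all of $(L,\bar d_L)$. That monotonicity argument is the missing ingredient in your proposal.
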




\begin{theorem}\label{thm:optEntryOUSL}
The optimal entry timing problem
\eqref{J2} admits the solution
\begin{align}\label{JsoluOUSL}
  J_L(x) =
\begin{cases}
  P\hat{F}(x) &\, \textrm{ if }\, x \in (-\infty,a_L^*),\\
  V_L(x)-x-\hat{c} &\, \textrm{ if }\, x \in [a_L^*,d_L^*],\\
  Q\hat{G}(x) &\, \textrm{ if }\, x \in (d_L^*,+\infty),
\end{cases}
\end{align}
where
\begin{align}
P = \frac{V_L(a_L^*)-a_L^*-\hat{c}}{\hat{F}(a_L^*)}, \quad
Q &= \frac{V_L(d_L^*)-d_L^*-\hat{c}}{\hat{G}(d_L^*)}.
\end{align} The optimal entry time is given by
\begin{align}\label{nu_OUSL}
\nu_{a_L^*,d_L^*} = \inf
\{  t\ge 0 \,:\, X_t \in [a_L^*, d_L^*]\},
\end{align}
where the critical levels $a_L^*$
and $d_L^*$ satisfy, respectively,
\begin{align}\label{eq:solveaOUSL} \hat{F}(a)(V_L'(a)  - 1) =
\hat{F}'(a)(V_L(a)-a-\hat{c}), \end{align}
and
\begin{align}\label{eq:solvedOUSL}
\hat{G}(d)(V_L'(d)  - 1) = \hat{G}'(d)(V_L(d)-d-\hat{c}).
\end{align}
\end{theorem}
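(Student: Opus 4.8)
The plan is to mirror the structure of the proof of Theorem \ref{thm:optEntryOU}, but now with the stop-loss reward function $\hat{h}_L(x) = V_L(x) - x - \hat{c}$ and the associated transformed obstacle $\hat{H}_L$, using the key fact that $\hat{H}_L$ now has \emph{two} one-sided tangency points rather than one, which is what produces the bounded entry interval $[a_L^*, d_L^*]$ and the disconnected delay region. Concretely, by the general machinery of Section \ref{sect-meth} (equations \eqref{EH1}--\eqref{Wy1} with $F,G,\psi,H$ replaced by $\hat{F},\hat{G},\hat{\psi},\hat{H}_L$), the value function is $J_L(x) = \hat{G}(x)\hat{W}_L(\hat{\psi}(x))$ where $\hat{W}_L$ is the smallest concave majorant of $\hat{H}_L$ on $[0,+\infty)$, so the entire task reduces to identifying $\hat{W}_L$ explicitly from the shape information in Lemma \ref{lm:hatHOUSL}.

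First I would invoke Lemma \ref{lm:hatHOUSL}: $\hat{H}_L(0)=0$, $\hat{H}_L < 0$ on $(0,\hat{\psi}(L)]$ and on $[\hat{\psi}(b_L^*),+\infty)$, it is strictly decreasing on those two outer pieces, it is convex on $(0,\hat{\psi}(L))$ and on $(\hat{\psi}(\bar d_L),+\infty)$, concave on $(\hat{\psi}(L),\hat{\psi}(\bar d_L))$, and its global maximizer $\hat z_1$ lies strictly inside $(\hat{\psi}(L),\hat{\psi}(\bar d_L))$. The picture (Figure \ref{fig:hatHOU}-type reasoning) is then: near $0$ the obstacle is negative and convex, so the concave majorant must lie strictly above it there and is pinned only by the value $0$ at the origin — hence on an initial interval $[0,\hat z_L^{(1)})$ the majorant is the straight line through the origin tangent to $\hat{H}_L$ from above; on the middle hump $\hat{W}_L$ coincides with $\hat{H}_L$; and for large $y$ where $\hat{H}_L$ is again convex and decreasing, the majorant is the horizontal... no — because it must stay a majorant of a function decreasing to a negative limit, the majorant on the right is a line tangent to $\hat{H}_L$ from above that meets it at some $\hat z_L^{(2)}$ and continues as a constant (or, more precisely, the smallest concave majorant is the tangent line at $\hat z_L^{(2)}$ extended; since $\hat{H}_L'\to 0$ need not hold here unlike Theorem \ref{thm:optEntryOU}, I must check the exact limiting behavior of $\hat{H}_L'$ as $y\to\infty$ to decide whether the right branch is a flat line or a tangent ray). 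Let me denote the two tangency abscissae $\hat z_1^* = \hat{\psi}(a_L^*)$ and $\hat z_2^* = \hat{\psi}(d_L^*)$; the tangency conditions are $\hat{H}_L(\hat z_1^*)/\hat z_1^* = \hat{H}_L'(\hat z_1^*)$ (line through origin) and the analogous condition on the right, and translating these back through $\hat{\psi}$, $\hat{G}$ exactly as in the displayed computation in the proof of Theorem \ref{thm:optEntryOU} — using $\hat{H}_L' = \frac{\hat{G}\,\hat{h}_L' - \hat{G}'\hat{h}_L}{\hat{F}'\hat{G}-\hat{F}\hat{G}'}$ and $\hat{h}_L = V_L - x - \hat{c}$ — yields precisely \eqref{eq:solveaOUSL} and \eqref{eq:solvedOUSL}. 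The constants $P$ and $Q$ come from evaluating $\hat{W}_L$ at the tangency points: $P = \hat{H}_L(\hat z_1^*)/\hat z_1^* $ rewritten via \eqref{generalhatH} as $(V_L(a_L^*)-a_L^*-\hat{c})/\hat{F}(a_L^*)$, and similarly for $Q$; then $J_L = \hat{G}\,\hat{W}_L\circ\hat{\psi}$ gives the three-branch formula \eqref{JsoluOUSL}, and the optimal stopping time is the hitting time of the region where $\hat{W}_L = \hat{H}_L$, i.e. $\{x : X_t \in [a_L^*, d_L^*]\}$, giving \eqref{nu_OUSL}.

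The steps, in order: (1) set up $J_L(x) = \hat{G}(x)\hat{W}_L(\hat{\psi}(x))$ via the Section \ref{sect-meth} construction, citing that Theorem \ref{thm:V}'s proof applies verbatim with the hatted data and the stop-loss reward; (2) from Lemma \ref{lm:hatHOUSL} read off that $\hat{W}_L$ has the three-piece form — tangent line from origin on $[0,\hat z_1^*]$, equal to $\hat{H}_L$ on $[\hat z_1^*,\hat z_2^*]$, tangent line/constant on $[\hat z_2^*,\infty)$ — and prove existence and uniqueness of $\hat z_1^*<\hat z_1 <\hat z_2^*$ from the convex/concave/convex shape plus sign information (strict convexity on the outer pieces forces uniqueness of each tangency); (3) translate the two tangency equations into \eqref{eq:solveaOUSL}–\eqref{eq:solvedOUSL} by the same algebraic manipulation as in Theorem \ref{thm:optEntryOU}, including the verification that the natural smooth-fit conditions hold; (4) read off $P,Q$ and assemble \eqref{JsoluOUSL}; (5) identify the optimal $\nu$ as the first entry into $[a_L^*,d_L^*]$, and confirm optimality either by the concave-majorant verification (the value $\hat{G}\hat{W}_L\circ\hat{\psi}$ dominates every admissible payoff by concavity of $\hat{W}_L$, and is attained by $\nu_{a_L^*,d_L^*}$) or by checking it solves the variational inequality \eqref{VIJ} with $V$ replaced by $V_L$.

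The main obstacle I anticipate is nailing down the right-hand branch of $\hat{W}_L$ precisely: unlike the no-stop-loss case where $\hat{H}'(y)\to 0$ as $y\to\infty$ made $\hat{W}$ flatten to a constant, here I need the exact asymptotics of $\hat{H}_L(y)$ and $\hat{H}_L'(y)$ as $y\to\infty$ — which depend on the behavior of $V_L(x)-x = -c$ (constant!) for $x\ge b_L^*$ divided by $\hat{G}(x)\to 0$, so $\hat{h}_L(x)/\hat{G}(x) = -(c+\hat{c})/\hat{G}(x) \to -\infty$, meaning $\hat{H}_L(y)\to -\infty$ as $y\to\infty$, not to a finite constant. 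This changes the geometry: the smallest concave majorant of a function going to $-\infty$ that is convex near infinity is the tangent line from the last concave region, extended as a ray — so on $(\hat z_2^*,\infty)$, $\hat{W}_L(y) = \hat{H}_L(\hat z_2^*) + (y - \hat z_2^*)\hat{H}_L'(\hat z_2^*)$, a genuine line, and one must check that this line indeed stays above $\hat{H}_L$ (guaranteed by convexity of $\hat{H}_L$ beyond $\hat{\psi}(\bar d_L)$ and $\hat z_2^* > \hat{\psi}(\bar d_L)$) and that it does not need to drop to hit $0$ or anything else. Reconciling this with the clean "$Q\hat{G}(x)$" form in \eqref{JsoluOUSL} requires that the tangent line in $y$-coordinates corresponds, after multiplying by $\hat{G}(x)$ and composing with $\hat{\psi}$, to exactly $Q\hat{G}(x)$ — which happens iff the tangent line passes through the origin, i.e. iff $\hat z_2^*$ is actually a \emph{tangent-from-origin} point $\hat{H}_L(\hat z_2^*)/\hat z_2^* = \hat{H}_L'(\hat z_2^*)$, not an arbitrary tangency; so the correct statement is that BOTH $\hat z_1^*$ and $\hat z_2^*$ are points where the tangent line passes through the origin, bracketing the hump, and the majorant is the origin-line on $[0,\hat z_1^*]$, equals $\hat{H}_L$ on the hump $[\hat z_1^*,\hat z_2^*]$, and is the origin-line again on $[\hat z_2^*,\infty)$ — wait, that cannot be a single concave function unless the two origin-lines have decreasing slopes, which they do since $\hat{H}_L$ is concave on the hump. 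I would need to carefully verify this two-tangent-lines-through-origin structure is consistent with concavity of $\hat W_L$ (slopes: first line slope $=\hat{H}_L'(\hat z_1^*)>0$, on the hump slope decreases from that to $\hat{H}_L'(\hat z_2^*)$, then the second origin-line has that same slope and continues — concave, good), and that is the technically delicate part of the argument.
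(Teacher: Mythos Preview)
Your overall strategy --- representing $J_L(x) = \hat{G}(x)\hat{W}_L(\hat{\psi}(x))$ and reading off $\hat{W}_L$ from the shape of $\hat{H}_L$ in Lemma \ref{lm:hatHOUSL} --- is exactly the paper's approach, and your treatment of the \emph{left} branch (tangent line through the origin at $\hat z_0 = \hat\psi(a_L^*)$, yielding the $P\hat{F}(x)$ piece and condition \eqref{eq:solveaOUSL}) is correct.  The gap is in the right branch, where you talk yourself out of the correct picture.

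You have the algebra backwards.  A line through the origin $\hat{W}_L(y) = m y$ gives, after multiplying by $\hat{G}$ and composing with $\hat{\psi}$, the form $m\,\hat{G}(x)\hat{\psi}(x) = m\,\hat{F}(x)$, \emph{not} $m\,\hat{G}(x)$.  It is a \emph{constant} piece $\hat{W}_L(y)\equiv Q$ that produces $Q\,\hat{G}(x)$.  So the form \eqref{JsoluOUSL} is telling you that the right branch of $\hat{W}_L$ is horizontal --- your first instinct was right.  The matching condition at the right endpoint is then $\hat{H}_L'(\hat z_1)=0$, i.e.\ $\hat z_1$ is the global maximizer of $\hat{H}_L$ (already located in Lemma \ref{lm:hatHOUSL}\eqref{hatHOUSL2}), and that is precisely what reduces to \eqref{eq:solvedOUSL}, which involves $\hat{G},\hat{G}'$ rather than $\hat{F},\hat{F}'$.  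By contrast, your proposed ``second tangent-through-origin'' point $\hat z_2^*$ would have to lie past the maximizer, hence $\hat{H}_L'(\hat z_2^*)<0$, forcing $\hat{H}_L(\hat z_2^*)<0$ and giving a majorant that is negative for large $y$ --- contradicting $J_L\ge 0$ and also failing to be concave once you enter the convex region $(\hat\psi(\bar d_L),\infty)$.  The paper resolves this by taking $\hat{W}_L$ to be the smallest \emph{non-negative} concave majorant of $\hat{H}_L$; the non-negativity (coming from the admissibility of $\nu=+\infty$) is what rules out any downward-sloping ray on the right and forces the horizontal plateau at level $\hat{H}_L(\hat z_1)$.
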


\begin{figure}[th]
\begin{center}
 %
 {\scalebox{0.35}{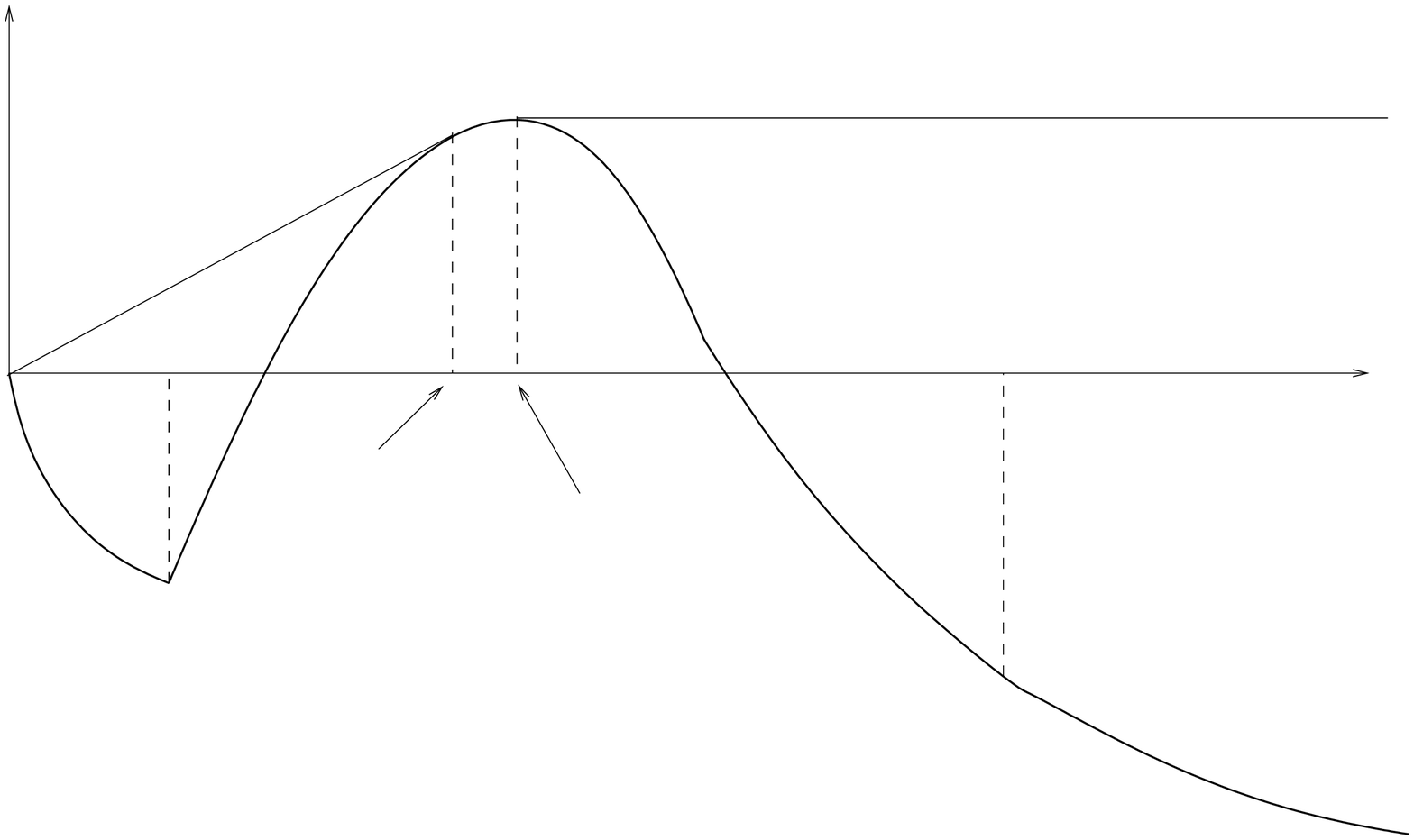}}
\caption{\small{Sketches of $\hat{H}_L$ and $\hat{W}_L$. $\hat{W}_L$ is a straight line tangent to $\hat{H}_L$ at $\hat{z}_0$ on $[0,\hat{z}_0)$, coincides with $\hat{H}_L$ on $[\hat{z}_0,\hat{z}_1]$, and is equal to the constant $\hat{H}_L(\hat{z}_1)$ on $(\hat{z}_1,+\infty)$. Note that $\hat{H}_L$ is not differentiable at $\hat{\psi}(L)$. }}
\label{fig:hatHOUSL}
\end{center}
\end{figure}

\begin{proof}
 We look for the value function of the form: $J_L(x) = \hat{G}(x)\hat{W}_L(\hat{\psi}(x)),$ where $\hat{W}_L$ is the smallest non-negative concave majorant of $\hat{H}_L$. From Lemma \ref{lm:hatHOUSL} and the sketch of $\hat{H}_L$ in Figure \ref{fig:hatHOUSL},
the maximizer of $\hat{H}_L$, $\hat{z}_1$,  satisfies
\begin{align}\label{eq:Hz1EntryOUSL}
\hat{H}_L'(\hat{z}_1)=0.
\end{align}
Also there exists a unique number $\hat{z}_0 \in (\hat{\psi}(L),\hat{z}_1)$ such that
\begin{align}\label{eq:Hz0EntryOUSL}
\frac{\hat{H}_L(\hat{z}_0)}{\hat{z}_0} =\hat{H}_L'(\hat{z}_0).
\end{align}
In turn, the smallest non-negative concave majorant admits the form:
\begin{align*}
\hat{W}_L(y) = \begin{cases}
y\hat{H}_L'(\hat{z}_0) &\, \textrm{ if }\, y\in [0,\hat{z}_0),\\
\hat{H}_L(y) &\, \textrm{ if }\, y\in [\hat{z}_0, \hat{z}_1],\\
\hat{H}_L(\hat{z}_1) &\, \textrm{ if }\, y\in (\hat{z}_1, +\infty).
\end{cases}
\end{align*}
Substituting $a_L^* = \hat{\psi}^{-1}(\hat{z}_0)$ into  \eqref{eq:Hz0EntryOUSL}, we have
\begin{align*}
\frac{\hat{H}_L(\hat{z}_0)}{\hat{z}_0} &= \frac{V_L(a_L^*)-a_L^*-\hat{c}}{\hat{F}(a_L^*)},\\
\hat{H}_L'(\hat{z}_0)&= \frac{\hat{G}(a_L^*)(V_L'(a_L^*)  - 1) - \hat{G}'(a_L^*)(V_L(a_L^*)-a_L^*-\hat{c})}
{\hat{F}'(a_L^*)\hat{G}(a_L^*)-\hat{F}(a_L^*)\hat{G}'(a_L^*)}.
\end{align*}
Equivalently, we can express condition \eqref{eq:Hz0EntryOUSL} in terms of $a_L^*$:
\begin{align*}
\frac{V_L(a_L^*)-a_L^*-\hat{c}}{\hat{F}(a_L^*)} = \frac{\hat{G}(a_L^*)(V_L'(a_L^*)  - 1) - \hat{G}'(a_L^*)(V_L(a_L^*)-a_L^*-\hat{c})}
{\hat{F}'(a_L^*)\hat{G}(a_L^*)-\hat{F}(a_L^*)\hat{G}'(a_L^*)}.
\end{align*}
Simplifying this shows that $a_L^*$ solves \eqref{eq:solveaOUSL}. Also, we can express $\hat{H}_L'(\hat{z}_0)$ in terms of $a_L^*$:
\begin{align*}
\hat{H}_L'(\hat{z}_0)=\frac{\hat{H}_L(\hat{z}_0)}{\hat{z}_0} = \frac{V_L(a_L^*)-a_L^*-\hat{c}}{\hat{F}(a_L^*)} = P.
\end{align*}
In addition, substituting $d_L^* = \hat{\psi}^{-1}(\hat{z}_1)$ into \eqref{eq:Hz1EntryOUSL}, we have
\begin{align*}
\hat{H}_L'(\hat{z}_1) = \frac{\hat{G}(d_L^*)(V_L'(d_L^*)  - 1) - \hat{G}'(d_L^*)(V_L(d_L^*)-d_L^*-\hat{c})}
{\hat{F}'(d_L^*)\hat{G}(d_L^*)-\hat{F}(d_L^*)\hat{G}'(d_L^*)}=0,
\end{align*}
which, after a straightforward  simplification,  is identical to  \eqref{eq:solvedOUSL}.
Also,  $\hat{H}_L(\hat{z}_1)$ can be written as
\begin{align*}
\hat{H}_L(\hat{z}_1) = \frac{V_L(d_L^*)-d_L^*-\hat{c}}{\hat{G}(d_L^*)}=Q.
\end{align*}
Substituting these to $J_L(x) = \hat{G}(x)\hat{W}_L(\hat{\psi}(x))$, we arrive at \eqref{JsoluOUSL}.
\end{proof}

 Theorem \ref{thm:optEntryOUSL} reveals that the optimal entry region is characterized by a price interval $[a^*_L, d^*_L]$ strictly  above  the
 stop-loss level $L$ and strictly below the optimal exit level $b^*_L$.  In particular, if the current asset price is between $L$ and $a_L^*$, then it is optimal for the investor to wait even though the price is low. This is intuitive because  if the entry price is too close to $L$, then  the investor is very likely to be forced to exit at a loss afterwards.  As a consequence,  the investor's delay  region, where she would wait to enter the market,  is  disconnected.

Figure  \ref{fig:SL_paths_3} illustrates  two simulated paths and the associated exercise times. We have chosen $L$ to be 2 standard deviations below the long-run mean $\theta$, with other parameters from our pairs trading example.
 By Theorem \ref{thm:optEntryOUSL}, the investor will enter the market at $\nu_{a_L^*,d_L^*}$ (see \eqref{nu_OUSL}). Since both paths start with $X_0>d_L^*$, the investor waits to enter until the OU path reaches $d_L^*$ from above, as indicated by  $\nu^*_d$ in panels (a) and (b).    After entry, Figure \ref{fig:SL_db_3} describes  the scenario where  the investor exits voluntarily  at the optimal level $b^*_L$, whereas  in Figure \ref{fig:SL_dL_3} the investor is forced to exit at the stop-loss level $L$.  These optimal levels are calculated from Theorem \ref{thm:optEntryOUSL} and Theorem \ref{thm:optLiquOUSL} based on the given estimated parameters.

\begin{figure}[h]
 \centering
 \subfigure[]{
  \includegraphics[scale=0.6]{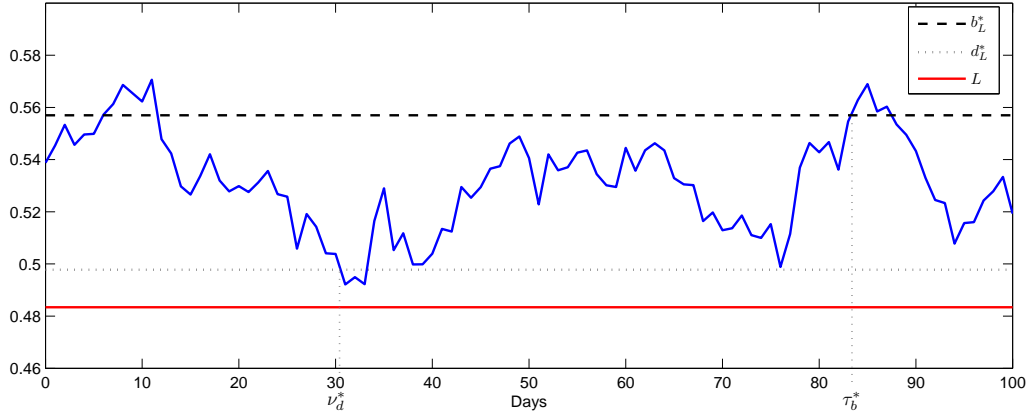}
   \label{fig:SL_db_3}
   }\\
    \subfigure[]{
  \includegraphics[scale=0.6]{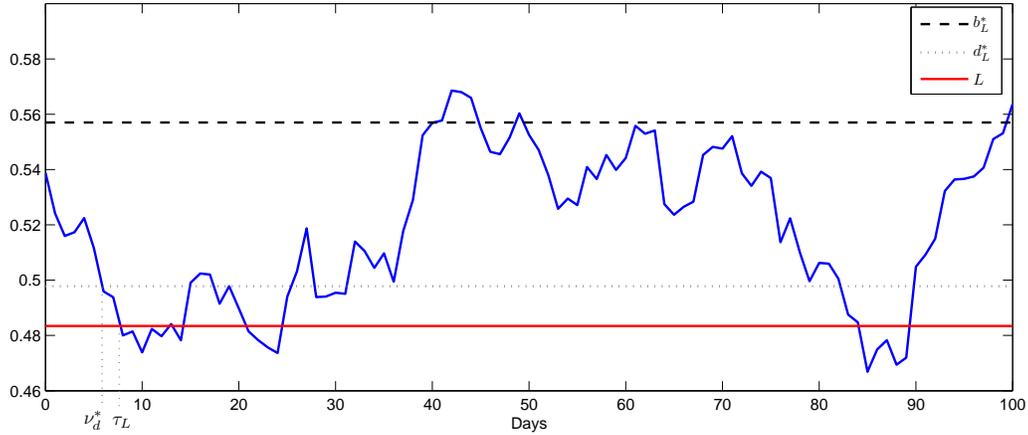}
   \label{fig:SL_dL_3}
   }
 \caption{\small{Simulated OU paths and exercise times.  (a) The investor enters at $\nu^*_d = \inf
\{  t\ge 0 \,:\, X_t \leq d_L^*\}$ with $d_L^*=0.4978$, and exit at  $\tau^*_b = \inf
\{  t\ge \nu^*_d \,:\, X_t \ge b_L^*\}$ with $b_L^* = 0.5570$.  (b)  The investor enters at $\nu^*_d= \inf
\{  t\ge 0 \,:\, X_t \leq d_L^*\}$ but exits at stop-loss level $L=0.4834$.   Parameters:   $\theta=0.5388$, $\mu = 16.6677$, $\sigma = 0.1599$, $ r=\hat{r}=0.05$, and $c=\hat{c}=0.05$.} }\label{fig:SL_paths_3} 
\end{figure}

\begin{remark}
We remark that the optimal levels $a_L^*$, $d_L^*$ and $b_L^*$ are outputs of the models, depending on the parameters $(\mu, \theta, \sigma)$ and the choice of stop-loss level $L$.  Recall that our model parameters are estimated based on the likelihood maximizing portfolio discussed in Section \ref{sect-pairs}. Other estimation methodologies and price data can be used, and may lead to different portfolio strategies $(\alpha, \beta)$ and estimated parameters values $(\mu, \theta, \sigma)$. In turn, the resulting optimal entry and exit thresholds may also change accordingly.
\end{remark}


\subsection{Relative Stop-Loss Exit}\label{sect-rela-sl}
 For some investors, it may be more desirable
  to set the stop-loss contingent on  the entry level.
  In other words, if the value of $X$ at the entry time  is $x$, then  the investor would
   assign a lower stop-loss level $x-\ell$, for some constant $\ell >0$. Therefore, the investor faces the optimal entry timing problem
\begin{align}\label{JrelaSL}
\mathcal{J}_\ell(x)= \sup_{\nu \in \setT }\E_x\left\{e^{-\hat{r} \nu} ( \mathcal{V}_\ell(X_{\nu})  - X_{\nu} - \hat{c})\right\},
\end{align}
where $\mathcal{V}_\ell(x) := V_{x-\ell}(x)$ (see \eqref{V2}) is the
optimal exit timing problem with stop-loss level $x-\ell$. The dependence of
$V_{x-\ell}(x)$ on $x$ is  significantly more complicated than $V(x)$ or $V_L(x)$,
making  the problem much less tractable. In Figure \ref{fig:rela_SL}, we
illustrate numerically the optimal timing strategies. The investor will still
enter at a lower level $d^*$. After entry, the investor will wait to exit at
either the stop-loss level $d^*-\ell$ or an upper level $b^*$.

\begin{figure}[ht]
\begin{center}\includegraphics[width=3.38in]{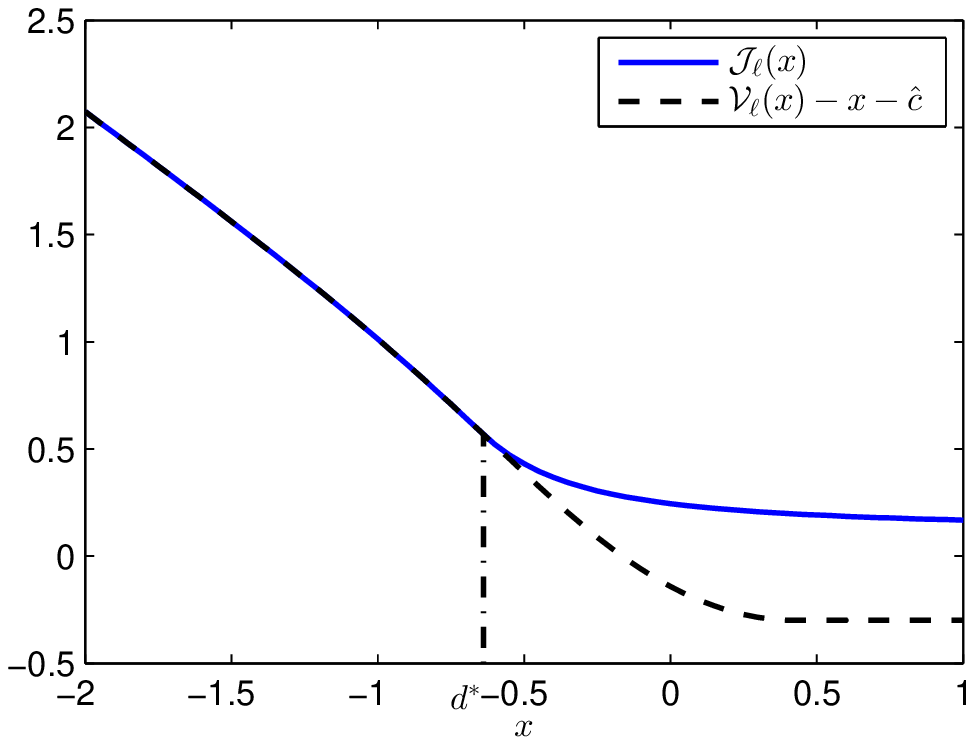}
\includegraphics[width=3.4in]{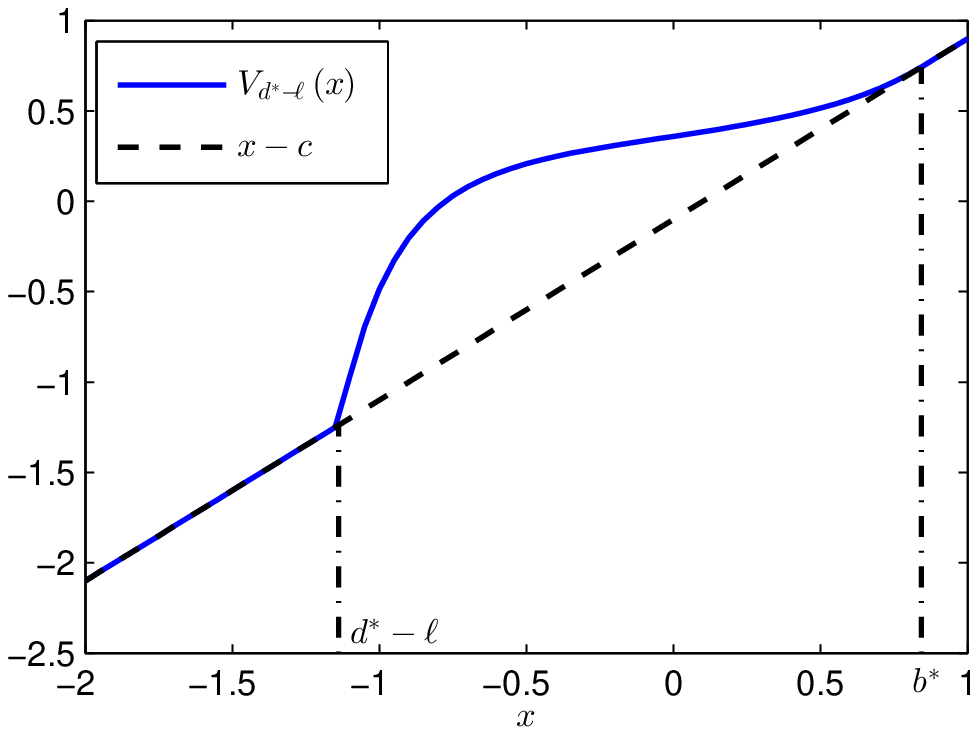}

\caption{\small{(Left) The optimal entry value function $\mathcal{J}_\ell(x)$ dominates the reward function $\mathcal{V}_\ell(x)  - x - \hat{c}$, and they coincide for  $x\le d^*$. (Right) For the exit problem,   the stop-loss level is $d^*-\ell$ and  the optimal liquidation level is $b^*$.  }}
\label{fig:rela_SL}
\end{center}\end{figure}

\subsection{Concluding Remarks}
Other extensions include adapting our double optimal stopping problem to the exponential OU,  Cox-Ingorsoll-Ross (CIR), or other underlying dynamics, and to countable number of trades \citep{zervos2011buy, zhang2008trading}.   Alternatively, one can model asset prices by specifying the dynamics of the dividend stream.  For instance, \cite{scheinkman2003overconfidence} study the optimal timing to trade between two speculative traders with different beliefs on the mean-reverting (OU) dividend dynamics.   Other than  trading of risky assets, it is also useful  to  study the timing to buy/sell derivatives written on  a mean-reverting underlying   (see e.g.  \cite{leungLiu} and \cite{LeungShirai}). For all these applications, it is natural to examine the optimal stopping problems over a finite horizon although explicit solutions are less available.

\appendix
\section{Appendix}
%

\noindent \textbf{A.1 ~Proof of Theorem \ref{thm:V} (Optimality of $V$).}\,

Since $\tau_a\wedge\tau_b\in \setT$, we have $V(x) \geq \sup_{\{a,b: a \leq x \leq b\}} \E_x\{e^{-r(\tau_a\wedge\tau_b)}h(X_{\tau_a\wedge\tau_b})\}=G(x)W(\psi(x))$.

To show the reverse inequality,  we first show that $G(x)W(\psi(x)) \geq \E_x\{e^{-r(t\wedge\tau)}G(X_{t\wedge\tau})W(\psi(X_{t\wedge\tau}))\}$, for $\tau\in\setT$ and $t\geq 0$. The concavity of  $W$ implies that,   for any   fixed $y$, there exists an affine function $L_y(z):=m_y z +c_y$ such that $L_y(z) \geq W(z)$ and $L_y(y)=W(y)$ at $z=y$, where $m_y$ and $c_y$ are both constants depending on  $y$.  This leads to the  inequality
\begin{align}
\E_x\{e^{-r(t\wedge\tau)}G(X_{t\wedge\tau})W(\psi(X_{t\wedge\tau}))\} &\leq \E_x\{e^{-r(t\wedge\tau)}G(X_{t\wedge\tau})L_{\psi(x)}(\psi(X_{t\wedge\tau}))\} \notag\\
&= m_{\psi(x)}\E_x\{e^{-r(t\wedge\tau)}G(X_{t\wedge\tau})\psi(X_{t\wedge\tau})\} + c_{\psi(x)} \E_x\{e^{-r(t\wedge\tau)}G(X_{t\wedge\tau})\}\notag\\
&= m_{\psi(x)}\E_x\{e^{-r(t\wedge\tau)}F(X_{t\wedge\tau})\}+ c_{\psi(x)} \E_x\{e^{-r(t\wedge\tau)}G(X_{t\wedge\tau})\}\notag\\
&= m_{\psi(x)}F(x) + c_{\psi(x)} G(x)\label{1}\\
&= G(x) L_{\psi(x)}(\psi(x))\notag\\
&= G(x)W(\psi(x)),\label{GWsuper}
\end{align}
where   $\eqref{1}$ follows from the martingale property of $(e^{-rt}F(X_t))_{t\ge0}$ and $(e^{-rt}G(X_t))_{t\ge0}$.

By  \eqref{GWsuper} and the fact that $W$ majorizes $H$, it follows that
\begin{align}
G(x) W(\psi(x)) & \geq \E_x\{e^{-r(t\wedge\tau)}G(X_{t\wedge\tau})W(\psi(X_{t\wedge\tau}))\}\notag\\
& \geq \E_x\{e^{-r(t\wedge\tau)}G(X_{t\wedge\tau})H(\psi(X_{t\wedge\tau}))\} = \E_x\{e^{-r(t\wedge\tau)}h(X_{t\wedge\tau})\}.\label{abcd}
\end{align}
Maximizing \eqref{abcd} over all $\tau \in \setT$ and $t\ge 0$ yields that  $G(x)W(\psi(x)) \geq V(x)$.
$\scriptstyle{\blacksquare}$

\noindent \textbf{A.2 ~Proof of Lemma \ref{lm:HOU} (Properties of $H$).}\, The continuity and twice differentiability of $H$ on $(0,+\infty)$ follow directly from those of $h$, $G$ and $\psi$. To show the continuity of $H$ at $0$, since $H(0)=\lim_{x\to-\infty}\frac{(x-c)^+}{G(x)}=0$, we only need to show that $\lim_{y\rightarrow 0} H(y) =0$. Note that $y = \psi(x) \rightarrow 0$, as $x\to -\infty$. Therefore,
\begin{align*}
\lim_{y\rightarrow 0}\limits H(y) = \lim_{x \to -\infty}\limits \frac{h(x)}{G(x)} = \lim_{x \to -\infty}\limits \frac{x-c}{G(x)} =  \lim_{x \to -\infty}\limits\frac{1}{G'(x)} = 0.
\end{align*}
We conclude that $H$ is also continuous at $0$.

\noindent(i) One can show that $\psi(x) \in (0,+\infty)$ for $x\in \R$ and is a strictly increasing function. Then property (i) follows directly from the fact that $G(x) > 0$.

\noindent(ii) By the definition of $H$,
\begin{align*}
H'(y) = \frac{1}{\psi'(x)} (\frac{h}{G})'(x) = \frac{h'(x)G(x) - h(x)G'(x)}{\psi'(x)G^2(x)}, \quad y=\psi(x).
\end{align*}

Since both $\psi'(x)$ and $G^2(x)$ are positive, we only need to determine the sign of $h'(x)G(x) - h(x)G'(x) = G(x) - (x-c)G'(x)$.

Define $u(x) := (x-c) - \frac{G(x)}{G'(x)}$. $u(x)+c$ is the intersecting point at $x$ axis of the tangent line of $G(x)$. Since $G(\cdot)$ is a positive, strictly decreasing and convex function, $u(x)$ is strictly increasing and $u(x)<0$ as $x\to -\infty$. Also, note that
\begin{align*}
&u(c) = - \frac{G(c)}{G'(c)} >0,\\
&u(L^*) = (L^*-c) - \frac{G(x)}{G'(x)} = \frac{\mu}{r}(\theta - L^*) - \frac{G(L^*)}{G'(L^*)} = -\frac{\sigma^2}{2r}\frac{G''(L^*)}{G'(L^*)}>0.
\end{align*}
Therefore, there exists a unique root  $x^*$ that solves $u(x) = 0$, and $x^*< c\wedge L^*$, such that
\begin{align*}
G(x) - (x-c)G'(x)\begin{cases}
< 0 &\, \textrm{ if }\, x \in (-\infty, x^*),\\
>0 &\, \textrm{ if }\, x \in (x^*,+\infty).
\end{cases}
\end{align*}
Thus $H(y)$ is strictly decreasing if $y \in (0, \psi(x^*))$, and increasing otherwise.

\noindent(iii) By the definition of $H$,
\begin{align*}
H''(y) = \frac{2}{\sigma^2G(x)(\psi'(x))^2}(\L-r)h(x),\quad y=\psi(x).
\end{align*}
Since $\sigma^2, G(x)$ and $(\psi'(x))^2$ are all positive, we only need to determine the sign of $(\L-r)h(x)$:
\begin{align*}
(\L - r)h(x) = \mu(\theta - x) - r(x-c) = (\mu\theta+rc)-(\mu+r)x
&\begin{cases}
\geq 0 &\, \textrm{ if }\, x \in (-\infty, L^*],\\
\leq 0 &\, \textrm{ if }\, x \in [L^*, +\infty).
\end{cases}
\end{align*}
Therefore, $H(y)$ is convex if $y \in (0, \psi(L^*)]$, and concave otherwise.
$\scriptstyle{\blacksquare}$

%
%

\noindent \textbf{A.3 ~Proof of Lemma \ref{lm:hatHOU} (Properties of $\hat{H}$).} We first show that $V(x)$ and $\hat{h}(x)$ are twice differentiable everywhere, except for $x=b^*$.  Recall that
\begin{align*}V(x) =
\begin{cases}
(b^*-c)\frac{F(x)}{F(b^*)} &\, \textrm{ if }\, x\in (-\infty,b^*),\\
x-c &\, \textrm{ otherwise},
\end{cases}
\quad \textrm{and} \quad \hat{h}(x) = V(x)-x-\hat{c}.
\end{align*}
Therefore, it follows from   \eqref{eq:solvebOU} that
\begin{align*}V'(x) =
\begin{cases}
(b^*-c)\frac{F'(x)}{F(b^*)} = \frac{F'(x)}{F'(b^*)} &\, \textrm{ if }\, x\in (-\infty,b^*),\\
1 &\, \textrm{ if }\, x\in (b^*,+\infty),
\end{cases}
\end{align*}
which implies that $V'(b^*-) = 1 = V'(b^*+).$ Therefore, $V(x)$ is differentiable everywhere and so is $\hat{h}$. However, $V(x)$ is not twice differentiable since
\begin{align*}V''(x) = 
\begin{cases}
\frac{F''(x)}{F'(b^*)} &\, \textrm{ if }\, x\in (-\infty,b^*), \\
0 &\, \textrm{ if }\, x\in (b^*,+\infty),
\end{cases}
\end{align*}
and $V''(b^*-) \neq V''(b^*+)$.
Consequently, $\hat{h}(x) = V(x)-x-\hat{c}$ is  not twice differentiable at $b^*$.

The twice differentiability of $\hat{G}$ and $\hat{\psi}$ are straightforward. The continuity and differentiability of $\hat{H}$ on $(0,+\infty)$ and twice differentiability on $(0,\hat{\psi}(b^*)) \cup (\hat{\psi}(b^*),+\infty)$ follow directly. Observing that $\hat{h}(x)>0$ as $x\to -\infty$, $\hat{H}$ is also continuous at $0$ by definition. We now establish the properties of $\hat{H}$.

\noindent(i) First we prove the value of $\hat{H}$ at $0$:
\begin{align*}
\hat{H}(0) = \lim_{x \to -\infty}\limits \frac{(\hat{h}(x))^+}{\hat{G}(x)} =  \limsup_{x\to-\infty}\limits\frac{\frac{(b^*-c)}{F(b^*)}F(x)-x-\hat{c}}{\hat{G}(x)} = \limsup_{x\to-\infty}\limits\frac{\frac{(b^*-c)}{F(b^*)}F'(x)-1}{\hat{G}'(x)}=0.
\end{align*}

Next, observe that $\lim_{x\to -\infty}\hat{h}(x) = +\infty$ and $\hat{h}(x) = -(c+\hat{c})$, for $x \in [b^*,+\infty)$. Since $F'(x)$ is  strictly increasing   and  $F'(x) >0$ for $x\in\R$, we have, for $x < b^*$,
\[\hat{h}'(x) = V'(x) -1 = \frac{F'(x)}{F'(b^*)} -1< \frac{F'(b^*)}{F'(b^*)}-1 =0,\]
which implies that $\hat{h}(x)$ is strictly decreasing for $x\in (-\infty,b^*)$. Therefore, there exists a unique solution $\bar{d}$ to $\hat{h}(x) = 0$, and $\bar{d}<b^*$, such that $\hat{h}(x) >0$ if $x \in (-\infty, \bar{d})$ and $\hat{h}(x) <0$ if $x \in (\bar{d},+\infty)$.
It is trivial that $\hat{\psi}(x) \in (0,+\infty)$ for $x\in \R$ and is a strictly increasing function. Therefore, along with the fact that $\hat{G}(x) > 0$, property (i) follows directly.

\noindent(ii) With $y=\hat{\psi}(x)$, for $x>b^*$,
\begin{align*}
\hat{H}'(y) = \frac{1}{\hat{\psi}'(x)} (\frac{\hat{h}}{\hat{G}})'(x) = \frac{1}{\hat{\psi}'(x)}(\frac{-(c+\hat{c})}{\hat{G}(x)})'=\frac{1}{\hat{\psi}'(x)}\frac{(c+\hat{c})\hat{G}'(x)}{\hat{G}^2(x)} < 0,
\end{align*}
since $\hat{\psi}'(x)>0$, $\hat{G}'(x)<0$, and $\hat{G}^2(x)>0$. Therefore, $\hat{H}(y)$ is strictly decreasing for $y > \hat{\psi}(b^*)$.

\noindent(iii) By the definition of $\hat{H}$,
\begin{align*}
\hat{H}''(y) = \frac{2}{\sigma^2\hat{G}(x)(\hat{\psi}'(x))^2}(\L-\hat{r})\hat{h}(x),\quad y=\hat{\psi}(x).
\end{align*}
Since $\sigma^2$, $\hat{G}(x)$ and $(\hat{\psi}'(x))^2$ are all positive, we only need to determine the sign of $(\L - \hat{r})\hat{h}(x)$:
\begin{align*}
(\L - \hat{r})\hat{h}(x) &= \half \sigma^2 V''(x) + \mu(\theta -x) V'(x) - \mu (\theta-x) - \hat{r}(V(x) -x -\hat{c})\\
& = \begin{cases}
(r-\hat{r})V(x) + (\mu+\hat{r})x-\mu\theta +\hat{r}\hat{c} &\, \textrm{ if }\, x < b^*,\\
\hat{r}(c+\hat{c})>0 &\, \textrm{ if }\, x > b^*.
\end{cases}
\end{align*}
To determine the sign of $(\L - \hat{r})\hat{h}(x)$ in $(-\infty,b^*)$, first note that $[(\L - \hat{r})\hat{h}](x)$ is a strictly increasing function in $(-\infty,b^*)$, since $V(x)$ is a strictly increasing function and $r\geq \hat{r}$ by assumption. Next note that for $x\in [L^*,b^*)$,
\begin{align*}
(\L - \hat{r})\hat{h}(x) &=(r-\hat{r})V(x) + (\mu+\hat{r})x-\mu\theta +\hat{r}\hat{c}\\
&\geq (r-\hat{r})(x-c) + (\mu+\hat{r})x-\mu\theta +\hat{r}\hat{c}\\
& = (r+\mu)x - (\mu\theta+rc)+\hat{r}(c+\hat{c})\\
& \geq (r+\mu)L^* - (\mu\theta+rc)+\hat{r}(c+\hat{c})= \hat{r}(c+\hat{c}) >0.
\end{align*}
Also, note that $(\L - \hat{r})\hat{h}(x)\! \to\! -\infty$ as $x\!\to\! -\infty$. Therefore, $(\L - \hat{r})\hat{h}(x) <0$ if $x \in (-\infty, \underline{b})$ and  $(\L - \hat{r})\hat{h}(x) >0$ if $x \in (\underline{b}, +\infty)$ with $\underline{b} < L^*$ being the break-even point.
From this, we conclude property (iii).
%
 $\scriptstyle{\blacksquare}$


\noindent \textbf{A.4 ~Proof of Lemma \ref{lm:hatHOUSL} (Properties of $\hat{H}_L$).}\,
(i) The continuity of  $\hat{H}_L(y)$ on $(0,+\infty)$  is implied by  the continuities of $\hat{h}_L$,  $\hat{G}$ and $\hat{\psi}$.  The continuity of $\hat{H}_L(y)$ at $0$ follows from
\begin{align*}\hat{H}_L(0)&=\lim_{x\to -\infty}\frac{(\hat{h}_L(x))^+}{\hat{G}(x)} = \lim_{x\to -\infty}\frac{0}{\hat{G}(x)}=0,\\
\lim_{y\rightarrow 0}\hat{H}_L(y) &= \lim_{x\to -\infty}\frac{\hat{h}_L}{\hat{G}}(x) = \lim_{x\to -\infty}\frac{-(c+\hat{c})}{\hat{G}(x)} = 0,
\end{align*}where   we have used that  $y = \hat{\psi}(x) \rightarrow 0$ as $x \to -\infty$.

Furthermore,   for $x \in (-\infty, L]\cup [b_L^*,+\infty)$, we have $V_L(x) = x-c$, and thus, $\hat{h}_L(x)=-(c+\hat{c})$. Also, with the facts that $\hat{\psi}(x)$ is a strictly increasing function and $\hat{G}(x) > 0$, property (i) follows.

\noindent (ii)  By the definition of $\hat{H}_L$, since $\hat{G}$ and $\hat{\psi}$ are differentiable everywhere, we only need to show the differentiability of $V_L(x)$. To this end,
 $V_L(x)$ is differentiable at $b_L^*$ by \eqref{VOUSLsol}-\eqref{eq:solvebOUSL}, but not at $L$. Therefore, $\hat{H}_L$ is differentiable for $y \in (0,\hat{\psi}(L)) \cup (\hat{\psi}(L), +\infty)$.

In view of the facts that $\hat{G}'(x) <0$, $\hat{\psi}'(x)>0$, and $ \hat{G}^2(x)>0$, we have for $x\in (-\infty, L)\cup [b_L^*,+\infty)$,
\begin{align*}
\hat{H}_L'(y) = \frac{1}{\hat{\psi}'(x)} (\frac{\hat{h}_L}{\hat{G}})'(x) =\frac{1}{\hat{\psi}'(x)} (\frac{-(c+\hat{c})}{\hat{G}(x)})' = \frac{(c+\hat{c})\hat{G}'(x)}{\hat{\psi}'(x)\hat{G}^2(x)} <0.
\end{align*}
Therefore, $\hat{H}_L(y)$ is strictly decreasing for $y \in (0, \hat{\psi}(L))\cup [\hat{\psi}(b_L^*),+\infty)$.

\noindent (iii) Both $\hat{G}$ and $\hat{\psi}$ are twice differentiable everywhere, while $V_L(x)$ is twice differentiable everywhere except at $x=L$ and $b^*$, and so is $\hat{h}_L(x)$. Therefore,
$\hat{H}_L(y)$ is twice differentiable on $(0,\hat{\psi}(L)) \cup (\hat{\psi}(L),\hat{\psi}(b^*)) \cup (\hat{\psi}(b^*),+\infty)$.

To determine the convexity/concavity of $\hat{H}_L$, we look at the second order derivative:
\begin{align*}
\hat{H}_L''(y) = \frac{2}{\sigma^2\hat{G}(x)(\hat{\psi}'(x))^2}(\L-\hat{r})\hat{h}_L(x),
\end{align*}whose sign is determined by
\begin{align*}
(\L - \hat{r})\hat{h}_L(x) &= \half \sigma^2 V_L''(x) + \mu(\theta -x) V_L'(x) - \mu (\theta-x) - \hat{r}(V_L(x) -x -\hat{c})\\
& = \begin{cases}
(r-\hat{r})V_L(x) + (\mu+\hat{r})x-\mu\theta +\hat{r}\hat{c} &\, \textrm{ if }\, x \in (L, b_L^*),\\
\hat{r}(c+\hat{c}) >0 &\, \textrm{ if }\, x \in (-\infty, L) \cup (b_L^*,+\infty).
\end{cases}
\end{align*}
This implies that $\hat{H}_L$ is convex for $y \in (0, \hat{\psi}(L))\cup (\hat{\psi}(b_L^*),+\infty)$.

On the other hand, the condition $\sup_{x\in\R}\hat{h}_L(x)>0$ implies that $\sup_{y\in [0,+\infty)}\hat{H}_L(y)>0$. By property \eqref{hatHOUSL0} and twice differentiability of $\hat{H}_L(y)$ for $y \in (\hat{\psi}(L),\hat{\psi}(b_L^*))$, there must exist an interval $(\hat{\psi}(\underline{a}_L), \hat{\psi}(\bar{d}_L)) \subseteq (\hat{\psi}(L),\hat{\psi}(b_L^*))$ such  that $\hat{H}_L(y)$ is concave, maximized at $\hat{z}_1 \in (\hat{\psi}(\underline{a}_L), \hat{\psi}(\bar{d}_L))$.

 Furthermore, if  $V_L(x)$ is strictly increasing on   $(L, b_L^*)$,    then   $(\L - \hat{r})\hat{h}_L(x)$ is also strictly increasing. To prove this, we first recall from  Lemma \ref{lm:HOU} that $H(y)$ is strictly increasing and concave on $(\psi(L^*), +\infty)$. By Proposition \ref{prop:bL}, we have $b_L^* < b^*$, which  implies $z_L < z$, and thus, $H'(z_L)>H'(z)$.

 Then, it follows from  \eqref{eq:HzzLiquOU}, \eqref{Wy} and \eqref{eq:WLy} that $W_L'(y)= H'(z_L)> H'(z) = W'(y)$ for $y\in (\psi(L), z_L)$. Next, since $W_L(y) = \frac{V_L}{G}\circ \psi^{-1}(y)$, we have
\begin{align*}
W_L'(y) &= \frac{1}{\psi'(x)} (\frac{V_L}{G})'(x) = \frac{1}{\psi'(x)} (\frac{V_L'(x)G(x) - V_L(x)G'(x)}{G^2(x)}).
\end{align*}
The same holds for $W'(y)$ with $V(x)$ replacing $V_L(x)$. As both $\psi'(x)$ and $G^2(x)$ are positive, $W_L'(y)>  W'(y)$ is equivalent to $V_L'(x)G(x) - V_L(x)G'(x) > V'(x)G(x) - V(x)G'(x)$. This implies that
\begin{align*}
V_L'(x) - V'(x) = -\frac{G'(x)}{G(x)}(V(x) - V_L(x)) >0,
\end{align*}
since $G(x)>0$, $G'(x)<0$, and $V(x)>V_L(x)$.  Recalling that $V'(x)>0$,   we have established that $V_L(x)$ is a strictly increasing function, and so is $(\L - \hat{r})\hat{h}_L(x)$. As we have shown the existence of an interval $(\hat{\psi}(\underline{a}_L), \hat{\psi}(\bar{d}_L)) \subseteq (\hat{\psi}(L),\hat{\psi}(b_L^*))$ over which $\hat{H}(y)$ is concave, or equivalently $(\L - \hat{r})\hat{h}_L(x)<0$ with $x = \hat{\psi}^{-1}(y)$. Then by the strictly increasing property of $(\L - \hat{r})\hat{h}_L(x)$, we conclude $\underline{a}_L = L$ and $\bar{d}_L \in (L, b_L^*)$ is the unique solution to $(\L - \hat{r})\hat{h}_L(x)=0$, and
\begin{align*}
(\L - \hat{r})\hat{h}_L(x) \begin{cases}
<0 &\, \textrm{ if }\, x \in (L, \bar{d}_L),\\
>0 &\, \textrm{ if }\, x \in (-\infty, L)\cup (\bar{d}_L, b_L^*) \cup (b_L^*,+\infty).
\end{cases}
\end{align*}
Hence, we conclude the convexity and concavity of the function $\hat{H}_L$.
$\scriptstyle{\blacksquare}$

\bibliographystyle{apa}
\linespread{-0.3}
\begin{small}
\bibliography{mybib_2012}
\end{small}
\end{document}